\newcommand{\bi}{\begin{itemize}}
\newcommand{\ei}{\end{itemize}}
\newcommand{\be}{\begin{enumerate}}
\newcommand{\ee}{\end{enumerate}}
\newtheorem{theorem}{Theorem}[section]
\newtheorem{lemma}[theorem]{Lemma}
\newtheorem{proposition}[theorem]{Proposition}
\newtheorem{corollary}[theorem]{Corollary}
\newenvironment{proof}[1][Proof:]{\begin{trivlist}
\item[\hskip \labelsep {\bfseries #1}]}{\end{trivlist}}
\newcommand{\qed}{\hfill\raisebox{1mm}{\framebox[.6em][l]{\rule{0mm}{.3mm}}}\vspace{2mm}}
\begin{document}
\title{Weakly Submodular Functions}
\author{
Allan Borodin\thanks{
Dept of Computer Science, University of Toronto, bor@cs.toronto.edu} \and
Dai Tri Man L\^e\thanks{ Altera Corporation; work done while 
at the Department of Computer Science, University of Toronto, ledt@cs.toronto.edu}\and
Yuli Ye\thanks{
Amazon Inc., yeyuli@amazon.com}
}

\date{\today}

\maketitle
\begin{abstract}
Submodular functions are well-studied in combinatorial optimization, game theory and economics. The natural diminishing returns property makes them suitable for many applications.
We study an extension of monotone submodular functions, which we call {\em weakly submodular functions}. Our extension 
%is somewhat unusual in that it 
includes
some (mildly) supermodular functions. We show that 
several natural functions belong to this class and relate our class to 
some other recent submodular function extensions. 

We consider the optimization problem of maximizing a weakly submodular function subject to uniform and general matroid constraints. For 
a uniform matroid constraint, the ``standard greedy algorithm'' achieves a 
constant approximation ratio where the constant 
(experimentally) converges to 5.95 as the 
cardinality constraint increases. For a general matroid constraint, a simple local search algorithm achieves a
constant approximation ratio where the constant (analytically) 
converges to 10.22 as
the rank of the matroid increases.
\end{abstract}

{\bf keywords:} submodular functions, max-sum dispersion, greedy algorithms, local 
search

\newpage 

\section{Introduction}

\label{section:wsubmodular}

There are many applications where the goal becomes a problem of maximizing
a submodular function subject to some constraint. In many cases the
submodular function $f$ is also monotone, non-negative and normalized so that
$f(\emptyset) = 0$. 
Such applications arise for example in the consideration
of influence in a stochastic social network as formalized in Kempe, Kleinberg
and Tardos \cite{KKT03}, diversified search ranking as in Bansal, Jain, Kazeykina and Naor  \cite{BJKN10} 
and document summarization as in Lin and Bilmes \cite{LB11}.  
In another application, following the work of Gollapudi and Sharma \cite{GS09}, 
Borodin, Lee and Ye \cite{BLY12}
considered the linear combination of a  
monotone submodular function that measures
the ``quality'' of a set of results combined with a diversity function 
given by the max-sum dispersion measure, a widely studied measure 
of diversity. Their analysis suggested that although the max-sum dispersion
measure is a supermodular function, it possessed similar properties 
to monotone submodular functions. In this paper we develop this idea by
introducing the class of \emph{weakly submodular} functions and show that 
greedy and local search algorithms can be used (respectively) to 
approximately maximize
such functions subject to a cardinality (resp. general matroid) constraint. 
More specifically, for any cardinality constraint $p$, the greedy
algorithm has a constant approximation ratio $\alpha(p)$ that 
experimentally appears to be converging (from below) to $5.95$ 
as $p$ increases. For a general matroid constraint with 
rank $s$, we prove that 
the local search algorithm has constant approximation ratio $\rho(s)$ 
which analytically is converging (from above) to $10.22$ as $s$ 
increases.  

The literature on the maximization of submodular functions
is extensive. Here we only mention the most relevant work.
Perhaps the most seminal paper concerning monotone submodular functions
is the Nemhauser, Fisher and Wosley paper \cite{NWF78} showing that natural
greedy and local search algorithms for maximizing a monotone submodular
function obtains approximation ratios $\frac{e}{e-1}$ (resp. 2)
for maximizing any monotone submodular function subject to a
cardinality (resp. arbitrary matroid) constraint. Our work shows that these
algorithms still enjoy constant approximation ratios for the broader class
of weakly submodular functions. More recent
work (see, \cite{FeigeMV2007,BuchbinderFNS12,BuchbinderFNS14}) provides constant approximation bounds for
unconstrained and constrained non monotone submodular functions.
 
The remainder of the paper is as follows. In section \ref{sec:prelim}, we
provide the definition of weakly submodular 
\footnote{Unfortunately, the term ``weakly submodular'' has been used before in
the context of lattices by Wild \cite{Wild08}. In that usage, such functions
are a subclass of submodular functions rather than a larger class. It
is difficult to find an appropriate name for class of functions
studied in this paper.  For example, we would have preferred  to have used
the term
{\it meta submodular} but that term is already used in the
computer science community
\cite{KPR09}. We do not believe that this abuse of terminology will
cause any confusion within the computer science community.} functions. 
In section \ref{sec:examples} we provide some basic observations
about this class of functions along with a number of examples of monotone 
submodular function (that are not submodular). Section \ref{sec:related-work} 
contains a discussion of two other frameworks for extending 
submodular functions.  Sections \ref{sec:cardinality} and \ref{ssec:wsub-fd}
contain analyses of the approximation ratios 
of the natural greedy (respectively local search) algorithms 
for maximizing monotone weakly submodular functions subject to cardinality
(respectively, matroid) constraints. We conclude in section \ref{sec:conclusion}
with some open problems.       

\section{Preliminaries}

\label{sec:prelim}

Let $f: U \rightarrow \Re$ be a set function over a universe $U$. 
We will restrict attention to
set functions that are 
normalized and non-negative.
That is, $f$ satifies: 
\begin{itemize}
\item
$f(\emptyset) = 0$ 
\item
$f(S) \geq 0$ for all $S \subseteq U$
%\item
%$f(S) \leq f(T)$ for all $S \subseteq T \subseteq U$; i.e. $f$ is monotone
\end{itemize}
For the most part, we will focus attention on functions that are monotone.
That is, 
\begin{itemize}
\item
$ f(S) \leq f(T)$ for all $S \subseteq T \subseteq U$ \\ 
\end{itemize}
  
A function $f(\cdot)$ is submodular if for any two sets $S$ and $T$, we have
$$f(S)+f(T)\ge f(S\cup T)+f(S\cap T).$$ 

We define the following generalization. We call a normalized, non-negatuve 
function $f(\cdot)$ {\em weakly submodular} 
if for any two sets $S$ and $T$, we have $$|T|f(S)+|S|f(T)\ge|S\cap T|f(S\cup T)+|S\cup T|f(S\cap T).$$

Our extension of submodularity ``normalizes'' the submodularity 
definition in terms of the cardinality of the sets occuring in the definition. 
This allows for some supermodular functions since now large sets with small
intersections can observe the required inequality. 
We will see that this generalization still retains  
the main algorithmic property of submodular functions; namely
that simple and efficient greedy and local search algorithms suffice 
to approximately maximize such functions subject to cardinality and general 
matroid constraints.

\section{Examples of Weakly Submodular Functions}

\label{sec:examples}

In this section, we will first consider some natural weakly submodular functions
showing in particular that this class includes all monotone 
submodular functions as well as some supermodular functions. 
In Section~\ref{sec:related-work}, we will relate weak submodularity  to
the functions of supermodular degree defined by Feige and Izsak 
\cite{FeigeI13} and further studied in Feldman and Izsak \cite{FeldmanI14}, 
and to the  $k$-wise dependent  
functions of Conitzer, Sandholm and Santi  \cite{ConitzerSS05}, and the 
related 
MPH-$k$ functions defined by Feige et al \cite{FeigeFIILS14}.  
%There are several natural examples of weakly submodular functions.
%Our examples of weakly submodular functions are all normalized, 
%non-negative and monotone.
%Note also that any linear combination of weakly submodular functions is
%also weakly submodular. \\

%Note that non-monotone submodular functions are also widely studied
%but we restrict attention to monotone functions. In contrast to 
%Proposition \ref{prop:subclass1}, if we extend the weakly submodular
%definition to non-monotone functions, then it is no longer the case that
%a non-monotone submodular function would necessarily be a 
%non-monotone weakly submodular function. 

\subsection{Submodular Functions}

From the weakly submodular definition, it is not obvious that 
monotone submodular functions are a subclass of weakly submodular functions.
We will prove that this is indeed the case. 

\begin{proposition}
\label{prop:subclass1}
Any monotone submodular function is weakly submodular. This, of course, implies
that every linear function (with non-negative weights) is weakly submodular. 
\end{proposition}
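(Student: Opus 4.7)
The plan is to rewrite the weakly submodular inequality in a way that separates a ``submodular part'' (with coefficient $|S\cap T|$) from a ``monotonicity part,'' and then handle each piece with the appropriate hypothesis. Introduce the disjoint-piece cardinalities $a = |S\setminus T|$, $b = |T\setminus S|$, and $c = |S\cap T|$, so that $|S| = a+c$, $|T| = b+c$, $|S\cup T| = a+b+c$, and $|S\cap T| = c$.

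Substituting these into the target inequality
\[
|T|f(S)+|S|f(T)\ge |S\cap T|f(S\cup T)+|S\cup T|f(S\cap T)
\]
and regrouping, the claim becomes
\[
c\bigl(f(S)+f(T)\bigr) + bf(S)+af(T) \;\ge\; c\bigl(f(S\cup T)+f(S\cap T)\bigr) + (a+b)f(S\cap T).
\]
The first step is to handle the terms with coefficient $c$: standard submodularity yields $f(S)+f(T)\ge f(S\cup T)+f(S\cap T)$, and multiplying by the non-negative integer $c$ takes care of that pair of summands. The second step is to handle the remaining terms $bf(S)+af(T)$ versus $(a+b)f(S\cap T)$: monotonicity gives $f(S)\ge f(S\cap T)$ and $f(T)\ge f(S\cap T)$, so scaling by $b$ and $a$ respectively and adding yields $bf(S)+af(T)\ge (a+b)f(S\cap T)$. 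Adding the two resulting inequalities gives exactly the rewritten target.

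There is no real obstacle here; the only subtle point is noticing that the ``extra'' factors $|T|$ and $|S|$ on the left split cleanly as $|S\cap T|+|T\setminus S|$ and $|S\cap T|+|S\setminus T|$, so that the $|S\cap T|$-weighted portion matches the submodular inequality while the remaining $a$ and $b$ weights are absorbed by monotonicity against $f(S\cap T)$. The corollary for non-negative linear functions then follows immediately since every such function is monotone and submodular.
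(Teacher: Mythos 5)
Your proof is correct. The decomposition you use is slightly different from the paper's: you weight the submodular inequality by $|S\cap T|=c$, which keeps everything symmetric in $S$ and $T$, and then the leftover terms $bf(S)+af(T)$ are dominated below by $(a+b)f(S\cap T)$ using only monotonicity applied to $S\cap T\subseteq S$ and $S\cap T\subseteq T$. The paper instead assumes WLOG $|S|\le|T|$, weights the submodular inequality by $|S|$, and is then forced into a second, less obvious step: it must peel off $|S\cap T|f(S\cup T)$ from $|S|f(S\cup T)$ and invoke monotonicity once more in the form $f(S\cup T)\ge f(S\cap T)$ to absorb the remainder. Both arguments use exactly the same two hypotheses (one application of submodularity plus monotonicity) and neither needs non-negativity or normalization of $f$, but your choice of coefficient makes the bookkeeping cleaner: no case on which of $|S|,|T|$ is larger, and no comparison between $f(S\cup T)$ and $f(S\cap T)$. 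The concluding remark about non-negative linear functions is fine, since such functions are monotone and modular, hence monotone submodular.
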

\begin{proof}

Given a monotone submodular function $f(\cdot)$
and two subsets $S$ and $T$, without loss of generality, we assume $|S|\le|T|$, then
$$|T|f(S)+|S|f(T)=|S|[f(S)+f(T)]+(|T|-|S|)f(S).$$
By submodularity $f(S)+f(T)\ge f(T\cup S)+f(T\cap S)$ and monotonicity $f(S)\ge f(S\cap T)$, 
we have
\begin{eqnarray*}
|T|f(S)+|S|f(T)&=&|S|[f(S)+f(T)]+(|T|-|S|)f(S)\\ 
&\ge&|S|[f(S\cup T)+f(S\cap T)]+(|T|-|S|)f(S\cap T)\\
&=&|S|f(S\cup T)+|T|f(S\cap T)\\
&=&|S\cap T|f(S\cup T)+\bigl[(|S|-|S\cap T|)f(S\cup T)+|T|f(S\cap T)\bigr].
\end{eqnarray*}
And again by monotonicity $f(S\cup T)\ge f(S\cap T)$, we have 
$$(|S|-|S\cap T|)f(S\cup T)+|T|f(S\cap T)\ge(|S|+|T|-|S\cap T|)f(S\cap T)=|S\cup T|f(S\cap T).$$
Therefore
$$|T|f(S)+|S|f(T)\ge|S\cap T|f(S\cup T)+|S\cup T|f(S\cap T);$$
the proposition follows.
\qed
\end{proof}

We note that the proof of Proposition \ref{prop:subclass1} did not
require the function $f(\cdot)$ to be normalized or non-negative. But the 
proof did use the monotonicity of $f(\cdot)$. 
Non-monotone submodular functions (such as Max-Cut and Max-Di-Cut) 
are, of course, also widely studied.
%but we restrict attention to monotone functions. 
In contrast to
Proposition \ref{prop:subclass1}, if we extend the weakly submodular
definition to non-monotone functions, then it is no longer the case that
a non-monotone submodular function would necessarily be a
non-monotone weakly submodular function.

\begin{proposition}
There is a non-monotone submodular function $f(\cdot)$ that is not weakly 
submodular. More specifically, the Max-Cut function (for a 
particular graph $G$) is not weakly submodular. 
\end{proposition}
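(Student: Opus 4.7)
The plan is to exhibit an explicit small graph $G$ together with two subsets $S, T$ of its vertex set on which the Max-Cut function $f_G$ satisfies the ordinary submodular inequality (by the standard argument that Max-Cut is submodular) but violates the weakly submodular inequality. Since the weakly submodular inequality reduces to an equality when one of $S, T$ contains the other, I would restrict my search to incomparable pairs of comparable size. To make the right-hand side $|S \cap T|\, f(S \cup T) + |S \cup T|\, f(S \cap T)$ large, I want $f(S \cap T)$ to be close to the global maximum cut, so $S \cap T$ should look like a maximum bipartition; to keep the left-hand side small I want $f(S)$ and $f(T)$ themselves to cut few edges, and the non-monotonicity of Max-Cut (in particular $f_G(V) = 0$) lets this happen whenever $S$ and $T$ are large.

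The candidate I would try first is the 4-cycle $C_4$ on vertex set $\{1,2,3,4\}$ with edges $\{1,2\}, \{2,3\}, \{3,4\}, \{1,4\}$, choosing $S = \{1,2,3\}$ and $T = \{1,3,4\}$. Then $S \cap T = \{1,3\}$ is a maximum cut of $C_4$ and $S \cup T = V$ has zero cut. A direct enumeration of the four cycle edges gives $f_G(S) = 2$, $f_G(T) = 2$, $f_G(S \cap T) = 4$, and $f_G(S \cup T) = 0$. With $|S| = |T| = 3$, $|S \cap T| = 2$, $|S \cup T| = 4$, the weakly submodular inequality asks whether $3 \cdot 2 + 3 \cdot 2 \geq 2 \cdot 0 + 4 \cdot 4$, i.e., whether $12 \geq 16$, which fails. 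Combined with the well-known fact that $f_G$ is a (non-monotone, normalized, non-negative) submodular function, this establishes the proposition.

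The only real obstacle is locating the example; once chosen, the verification is a finite computation. The guiding principle is that weak submodularity trades $|S \cap T|\, f(S \cup T)$ against $|S \cup T|\, f(S \cap T)$, and when $f$ is monotone the inequality $f(S \cup T) \geq f(S \cap T)$ makes the heavier weight $|S \cup T|$ act in the favorable direction — this is precisely what drives the proof of Proposition~\ref{prop:subclass1}. When $f$ is non-monotone, $f(S \cup T)$ can collapse to zero while $f(S \cap T)$ remains large, reversing the relationship and letting the cardinality weights amplify the violation. The 4-cycle is essentially the smallest graph that realizes this reversal with a proper incomparable pair $(S, T)$.
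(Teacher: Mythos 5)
Your proof is correct and is essentially the paper's own argument: the paper uses the graph with vertex set $R\cup\{s\}\cup\{t\}$ and edges $(s,u),(u,t)$ for $u\in R$, and your $C_4$ example with $S=\{1,2,3\}$, $T=\{1,3,4\}$ is exactly the $n=2$ instance of that family up to relabeling (the paper gets $2n^2+2n\ge 2n^2+4n$ failing, which for $n=2$ is your $12\ge 16$). All of your edge counts check out, so no issues.
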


\begin{proof}

Consider a graph $G = (U,E)$ where $V = R \cup \{s\} \cup \{t\}$ and 
$E = \{(s,u), (u,t) | u \in R\}$.
Letting $S = R \cup \{s\}$ and $T = R \cup \{t\}$, for $|R| = n$ we have the following: 
\begin{itemize}
\item
$f(S) = f(T) = n$
\item
$f(S \cup T) = f(U) = 0$
\item
$f(S \cap T) = f(R) = 2n$
\end{itemize}

Thus 
\begin{enumerate}
\item $|T| f(S) + |S| f(T) = (n+1) n  + (n+1) n = 2n^2 + 2n$
\item
$|S \cap T| f(S \cup T) + |S \cup T| f(S \cap T) = 
n \cdot  0 + (n+2) \cdot 2n = 2n^2 + 4n$
\end{enumerate}

This contradicts the weakly submodular definition. 
\qed
\end{proof}

\begin{proposition}
\label{prop:complement}
Let $f$ be a monotone weakly submodular function. Then 
the complement function ${\bar f} = f(U/S)$ is 
weakly submodular iff $f$ is submodular. 

\end{proposition}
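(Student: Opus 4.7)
The plan is to prove each direction of the biconditional separately by exploiting the symmetry induced by complementing sets. For the ($\Rightarrow$) direction (weak submodularity of $\bar{f}$ implies submodularity of $f$), I would fix arbitrary subsets $A, B \subseteq U$ and apply the weak submodularity inequality that $\bar{f}$ satisfies to the complementary pair $S := U\setminus A$ and $T := U\setminus B$. Unpacking $\bar{f}(X)=f(U\setminus X)$, De Morgan's laws give $\bar{f}(S\cup T)=f(A\cap B)$ and $\bar{f}(S\cap T)=f(A\cup B)$, while the cardinality identities $|S|=|U|-|A|$, $|T|=|U|-|B|$, $|S\cap T|=|U|-|A\cup B|$, $|S\cup T|=|U|-|A\cap B|$ rewrite the inequality in terms of $f$. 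After collecting the $|U|$-multiples on the left, it becomes
\[ |U|\bigl[f(A)+f(B)-f(A\cap B)-f(A\cup B)\bigr] \geq |B|f(A)+|A|f(B)-|A\cup B|f(A\cap B)-|A\cap B|f(A\cup B). \]
The right-hand side is precisely the weak submodularity gap of $f$ at the pair $(A,B)$, which is non-negative by the weak submodularity of $f$ (hypothesis). Dividing by $|U|>0$ then yields $f(A)+f(B)\geq f(A\cap B)+f(A\cup B)$, i.e., submodularity.

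For the ($\Leftarrow$) direction (submodularity of $f$ implies weak submodularity of $\bar{f}$), I would assume $f$ is monotone submodular and establish the same displayed inequality. The approach mirrors the proof of Proposition~\ref{prop:subclass1}: assuming without loss of generality that $|A|\leq|B|$, I would split the coefficient $|U|$ on the left-hand side as $|U|=|A\cup B|+|U\setminus(A\cup B)|$, apply submodularity $f(A)+f(B)\geq f(A\cup B)+f(A\cap B)$ to absorb the common part, and then apply monotonicity ($f(A\cap B)\leq f(A),f(B)\leq f(A\cup B)$) to the residual coefficient differences, paralleling the telescoping chain in Proposition~\ref{prop:subclass1} but with the extra $|U|$ scaling providing additional slack.

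The main obstacle will be the backward direction, specifically the coefficient bookkeeping. In the forward direction, the weak submodularity of $f$ supplies a ready-made non-negativity for the right-hand side, and no further work is needed once the substitution and rearrangement are completed. In the backward direction, however, we must show that the $|U|$-scaled submodular surplus on the left genuinely dominates the weak submodularity gap on the right, not merely that both are non-negative. The delicate step will therefore be arranging the applications of submodularity and monotonicity so that the excess factor $|U\setminus(A\cup B)|$ multiplying the submodular surplus is exactly what carries the inequality through, which is where the hypothesis that $f$ is genuinely submodular (rather than merely weakly submodular) must be used decisively.
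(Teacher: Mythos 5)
Your forward direction is correct and complete, and it supplies exactly the argument the paper omits: the paper's proof announces that it will show ``when ${\bar f}$ is also weakly submodular, then $f$ is submodular'' and then stops, whereas your substitution $S=U\setminus A$, $T=U\setminus B$ together with De Morgan's laws and the cardinality identities does the job cleanly. The rearranged inequality
\[
|U|\bigl[f(A)+f(B)-f(A\cap B)-f(A\cup B)\bigr]\;\ge\;|B|f(A)+|A|f(B)-|A\cup B|f(A\cap B)-|A\cap B|f(A\cup B)
\]
has a non-negative right-hand side by the weak submodularity of $f$, so dividing by $|U|>0$ gives submodularity; as the paper notes parenthetically, monotonicity of $f$ is not needed for this direction.

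The backward direction is where your plan breaks down, and the obstacle is not coefficient bookkeeping. Your own reduction shows that weak submodularity of ${\bar f}$ at the pair $(U\setminus A,U\setminus B)$ is \emph{equivalent} to the displayed inequality, i.e., to $|U|$ times the submodular surplus of $f$ at $(A,B)$ dominating the weak-submodular surplus of $f$ at $(A,B)$. Now take $f(S)=|S|$, which is monotone and submodular (hence weakly submodular, so it satisfies the hypotheses). Its submodular surplus is identically zero, while the right-hand side equals $2\bigl[|A|\,|B|-|A\cap B|\,|A\cup B|\bigr]=2\,|A\setminus B|\,|B\setminus A|>0$ whenever $A$ and $B$ cross. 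Concretely, with $U=\{1,2,3,4\}$, ${\bar f}(X)=4-|X|$, $S=\{1,2\}$, $T=\{2,3\}$ one gets $|T|{\bar f}(S)+|S|{\bar f}(T)=8<10=|S\cap T|{\bar f}(S\cup T)+|S\cup T|{\bar f}(S\cap T)$. So no arrangement of submodularity and monotonicity applications can carry the inequality through: the step you flag as delicate is in fact impossible, and the ``if'' direction of the proposition fails as stated. (The paper's own one-line appeal to closure of submodular functions under complementation does not rescue it either: ${\bar f}$ is anti-monotone and non-normalized, and the paper itself shows via Max-Cut that submodularity without monotonicity does not imply weak submodularity.) The salvageable content of the proposition is your forward direction alone.
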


\begin{proof}

It is well known that submodular functions are closed under complememts 
so one direction of the proposition holds. We now show that when 
${\bar f}$ is also weakly submodular, then $f$ is submodular. 
(This direction holds whether or not $f$ is monotone.)

\end{proof}

On the other hand, it is easy to construct non-monontone
weakly submodular functions from any 
%normalized, 
%non-negative 
monotone weakly submodular function $f$ having at least one posiitve 
valuation. 
Namely, let $f(S^*) > 0$ for some $S$ with 
$\emptyset \subset S^* \subset U$. 
Then define the function 
$g$ to be identical to $f$ except that $g(U) = 0$. Clearly, $g$ is 
non-monotone. We can verify that $g$ is weakly submodular by
checking the cases where $U$ appears in the inequality that
defines weak submodularity, namely when either $S$ or $T$ is $U$, or
when $S \cup T = U$. Furthermore, if $f$ was say the metric dispersion
function, $g$ is then clearly not submodular. 

\vspace{.2in}
Hereafter, we will we restrict attention to monotone (and by definition, 
non-negative and  
normalized) 
functions. In the remaining subsections of section \ref{sec:examples}, we 
present a number of monotone submodular functions that are not
submodular.

%Then $|T| f(S) + |S| f(T) = 2 * 98 + 2* 98 =  392$
%while $|S \cap T| f(S \cup T) + |S \cup T| f(S \cap T) = 1*95 + 3*100 = 395$

\subsection{Sum of Metric Distances of a Set}

%\todo{Mention NP hardness of approximation}

Let $U$ be a metric space with a distance function $d(\cdot,\cdot)$.
For any subset $S$, define $d(S)$ to be the sum of distances induced by $S$;
i.e., $$d(S)=\sum_{\{u,v\} \subseteq S} d(u,v)$$ where $d(u,v)$ measures the distance between $u$ and $v$. The problem of maximizing $d(S)$ 
(subject to say a cardinality or matroid constraint) is
one of many dispersion problems studied in location theory. 

We also extend the function to a pair of disjoint subsets $S$ and $T$ and define $d(S,T)$ to be the sum of distances between
$S$ and $T$; i.e., \[d(S, T)=\sum_{ u\in S, v\in T } d(u,v).\]

We have the following proposition. 
\begin{proposition}
\label{prop:subclass2}
The sum of metric distances $d(S)$ of a set is weakly submodular (and clearly
monotone).
\end{proposition}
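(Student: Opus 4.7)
The plan is to decompose $S$ and $T$ into their symmetric-difference and intersection pieces, expand both sides of the weakly submodular inequality in terms of the sums $d(\cdot)$ and $d(\cdot,\cdot)$ over these pieces, and reduce the whole statement to a single inequality that can be proved by the triangle inequality.

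Concretely, I would set $A = S \setminus T$, $B = T \setminus S$, $C = S \cap T$, with $a = |A|$, $b = |B|$, $c = |C|$, so that $|S| = a+c$, $|T| = b+c$, $|S \cup T| = a+b+c$, $|S \cap T| = c$. Using the additive decomposition $d(X \cup Y) = d(X) + d(Y) + d(X,Y)$ for disjoint $X,Y$, I would rewrite
\[
d(S) = d(A) + d(C) + d(A,C), \quad d(T) = d(B) + d(C) + d(B,C),
\]
\[
d(S \cup T) = d(A) + d(B) + d(C) + d(A,B) + d(A,C) + d(B,C), \quad d(S \cap T) = d(C).
\]

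Next I would substitute these expansions into $|T|d(S) + |S|d(T) - |S\cap T| d(S\cup T) - |S\cup T| d(S\cap T)$ and collect terms. The $d(C)$ coefficients cancel (they contribute $(b+c)+(a+c)-c-(a+b+c) = 0$), the $d(A,C)$ and $d(B,C)$ coefficients simplify to $b$ and $a$ respectively, and the $d(A,B)$ coefficient becomes $-c$. The routine bookkeeping leaves me with needing to prove
\[
b\, d(A) + a\, d(B) + b\, d(A,C) + a\, d(B,C) \;\geq\; c\, d(A,B).
\]

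The main (and only interesting) step is verifying this inequality. Since $d(A), d(B) \geq 0$, it suffices to show $b\, d(A,C) + a\, d(B,C) \geq c\, d(A,B)$. For this I would apply the triangle inequality: for every $u \in A$, $v \in B$, $w \in C$, we have $d(u,v) \leq d(u,w) + d(w,v)$. Summing over $u \in A, v \in B, w \in C$ gives
\[
c\, d(A,B) = \sum_{\substack{u \in A, v \in B \\ w \in C}} d(u,v) \;\leq\; \sum_{\substack{u \in A, v \in B \\ w \in C}} \bigl(d(u,w) + d(w,v)\bigr) = b\, d(A,C) + a\, d(B,C),
\]
where on the right each pair $(u,w) \in A \times C$ is counted $|B| = b$ times and each pair $(v,w) \in B \times C$ is counted $|A| = a$ times. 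The case $c = 0$ makes the desired inequality trivial since the right-hand side vanishes. Combining this with the nonnegativity of $d(A)$ and $d(B)$ closes the proof, and monotonicity of $d(\cdot)$ is immediate from the fact that adding an element only adds nonnegative distance terms.
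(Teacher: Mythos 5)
Your proof is correct and follows essentially the same route as the paper's: the same decomposition into $A = S\setminus T$, $B = T\setminus S$, $C = S\cap T$, and the same key inequality $|B|\,d(A,C) + |A|\,d(B,C) \ge |C|\,d(A,B)$ obtained from the triangle inequality. The only difference is presentational --- you collect coefficients term by term and spell out the triple summation behind the triangle-inequality step, which the paper merely asserts.
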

\begin{proof}
Given two subsets $S$ and $T$ of $U$, let $A=S\setminus T$, $B=T\setminus S$ and $C=S\cap T$.
Observe the fact that by the triangle inequality, we have
$$|B|d(A,C)+|A|d(B,C)\ge|C|d(A,B).$$
Therefore,
\begin{eqnarray*}
&  &|T|d(S)+|S|d(T)\\
&=&(|B|+|C|)[d(A)+d(C)+d(A,C)]+(|A|+|C|)[d(B)+d(C)+d(B,C)]\\ 
&=&|C|[d(A)+d(B)+d(C)+d(A,C)+d(B,C)]+(|A|+|B|+|C|)d(C)\\
&  &+|B|d(A)+|A|d(B)+|B|d(A,C)+|A|d(B,C)\\
&\ge&|C|[d(A)+d(B)+d(C)+d(A,C)+d(B,C)]+|S\cup T|d(S\cap T)+|C|d(A,B)\\
&=&|C|[d(A)+d(B)+d(C)+d(A,C)+d(B,C)+d(A,B)]+|S\cup T|d(S\cap T)\\
&=&|S\cap T|d(S\cup T)+|S\cup T|d(S\cap T).
\end{eqnarray*}
\qed
\end{proof}

\subsection{Minimum Cardinality Functions}

For any $k \geq 1$, let $f_k(S) = B > 0$ iff $|S| \geq k$. 

\begin{proposition} 
\label{prop:weak-not-super} 

\begin{enumerate}
\item For $k = 1,2$, $f_k$ is weakly submodular
\item For $k \geq 3, f_k$ is not weakly submodular on any universe of
size at least $k$
\end{enumerate}. 
\end{proposition}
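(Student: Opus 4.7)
The plan is to handle the two parts separately. For $k=1$, I would first observe that $f_1$ is already monotone submodular: the marginal value $f_1(S\cup\{x\})-f_1(S)$ equals $B$ when $S=\emptyset$ and is $0$ otherwise, so the diminishing-returns property holds trivially. Weak submodularity then follows immediately from Proposition~\ref{prop:subclass1}.

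For $k=2$, the function $f_2$ is not itself submodular (the marginal of a first element added to a singleton is $B$ while the marginal to $\emptyset$ is $0$), so I would verify the inequality directly from the definition. Using the parametrization $a=|S\cap T|$, $b=|S\setminus T|$, $c=|T\setminus S|$ (so $|S|=a+b$, $|T|=a+c$, $|S\cup T|=a+b+c$), the four values of $f_2$ appearing in the inequality are each $0$ or $B$ depending only on whether the corresponding cardinality reaches $2$. A case analysis on $a$ finishes everything: if $a\ge 2$, all four values equal $B$ and both sides collapse to $(2a+b+c)B$; if $a=0$, the right-hand side vanishes; and if $a=1$, either $b=c=0$ (both sides are $0$) or $b+c\ge 1$, in which case the right-hand side is exactly $B$ and one checks in the sub-cases $b=0<c$, $c=0<b$, and $b,c\ge 1$ that the left-hand side is at least $B$ (indeed equal to $B$ when $\min(b,c)=0$ and strictly larger when both $b,c\ge 1$).

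For $k\ge 3$ and any universe $U$ with $|U|\ge k$, I would exhibit a direct counterexample. Pick $S,T\subseteq U$ with $|S|=|T|=k-1$, $|S\cap T|=k-2$, and $|S\cup T|=k$; this fits because $|U|\ge k$ (choose $k-2$ shared elements together with one private element in each of $S$ and $T$). Each of $|S|,|T|,|S\cap T|$ is strictly less than $k$, so $f_k$ vanishes on all three of these sets, while $|S\cup T|=k$ gives $f_k(S\cup T)=B$. The weakly submodular inequality therefore reduces to $0\ge (k-2)B$, which fails for every $k\ge 3$.

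The main obstacle is simply the bookkeeping in the $k=2$ case, where several sub-cases on $b$ and $c$ must each be checked. The $k=1$ statement is an immediate appeal to Proposition~\ref{prop:subclass1}, and the $k\ge 3$ direction is a one-line computation once the right example is in hand.
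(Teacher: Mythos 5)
Your proof is correct, and it differs from the paper's in two places worth noting. For $k\ge 3$ the paper's witness is $S=\{a_1,\dots,a_{k-2},u\}$ and $T=\{b_1,\dots,b_{k-2},u\}$ with a singleton intersection, which uses $2k-3$ distinct elements and so, strictly speaking, only covers universes of size at least $2k-3$; your pair with $|S\cap T|=k-2$ and $|S\cup T|=k$ needs exactly $k$ elements and therefore establishes the claim precisely as stated (``any universe of size at least $k$''), a small but genuine improvement. For $k=1$ the paper verifies the inequality directly from the identity $|S|+|T|=|S\cap T|+|S\cup T|$, whereas you observe that $f_1$ is monotone submodular and invoke Proposition~\ref{prop:subclass1}; both are valid, and yours reuses existing machinery at the cost of having to justify submodularity of $f_1$ (which you do, via the marginal-value computation; note that the appeal to Proposition~\ref{prop:subclass1} does require monotonicity, which holds here). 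For $k=2$ your case analysis on $a=|S\cap T|$ is essentially the same argument as the paper's case split on $\min(|S|,|T|)$, just organized by a different parameter, and your sub-cases all check out.
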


\begin{proof}

In all cases, we need only restrict attention to non empty sets 
$S$ and $T$ in the weak submodularity definition since we are assuming 
$f(\emptyset) = 0$. 
\begin{enumerate}
\item For $k = 1$, weak submodularity follows from the fact that
$|S| + |T| = |S \cap T| + |S \cup T|$ given that $f_1(Z) = B$ for 
all non empty sets $Z$.
\item For $k = 2$, we can verify that $f$ is weakly submodular 
by considering the
possible cardinalities of the sets in the
weakly submodular definition; that is, when
say $|S| \leq |T|$ we consider the cases $|S| < 2$ and $|S| \geq 2$. For
$|S| < 2$, either $S \subseteq T$ or $|S \cap T| = \emptyset$ and we can
easily verify that $f$ satisfies the weak submodularity definition in either
case. If $|S|$ and $|T|$ are both $\geq 2$, then weak submodularity 
follows as in the proof for $k = 1$ since $f_2(Z) = B$ for 
all sets $Z$ with cardinality at least 2.

\item If $k \geq 3$, let $S = \{a_1, \ldots a_{k-2},u\}$ and  
$T = \{b_1, \ldots, b_{k-2},u\}$ for distinct elements 
$a_1 \ldots a_{k-2}, ...b_1 \ldots b_{k-2},u$. Then   
\begin{itemize}
\item $|T| f_k(S) + |S| f_k(T) = 0$
\item
                   $|S \cap T| f_k(S \cup T) + |S \cup T| f_k(S \cap T)$
                    = B
\end{itemize}

\end{enumerate} 
\qed
\end{proof}

\subsection{Average Non-Negative Segmentation Functions}
Motivated by appliations in clustering and data mining, Kleinberg, 
Papadimitriou and Raghavan \cite{KPR09} introduce the general class of
segmentation functions. In their generality, segmentation functions 
need not be submodular nor monotone. They show that every segmentation belongs
to call they call {\it meta-submodular functions} and consider the greedy 
algorithm for ``weakly montone'' 
meta-submodular functions. We now consider another broad class of segmentation 
functions. 

Given an $m\times n$ matrix $M$ and any subset $S\subseteq [m]$, 
a {\em segmentation function} $\sigma(S)$ is the sum of the maximum elements of each column whose row indices appear in $S$; i.e.; $\sigma(S)=\sum_{j=1}^n\max_{i\in S} M_{ij}$.
A segmentation function is {\em average non-negative} if for each row $i$, the sum of all entries of $M$ is non-negative; i.e.,
$\sum_{j=1}^n M_{ij}\ge 0$. 

We can use columns to model individuals, and rows to model items, then each entry of $M_{ij}$ represents how much the individual $j$ likes the item $i$. The average non-negative property basically requires that for each item $i$, on average people do not hate it. Next, we show that an average non-negative segmentation function is weakly-submodular. We first prove the following two lemmas.

\begin{lemma}
An average non-negative segmentation function is monotone.
\end{lemma}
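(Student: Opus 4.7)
The plan is to establish the monotonicity inequality $\sigma(S) \leq \sigma(T)$ whenever $S \subseteq T$ by splitting into two cases based on whether $S$ is empty. The non-empty case is essentially free from the definition of $\sigma$, and all the real work is concentrated at the empty-set boundary, which is exactly where the average non-negative hypothesis is needed.

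First I would dispatch the case $S \neq \emptyset$. For each column $j \in [n]$, enlarging the index set from $S$ to $T$ can only increase the column maximum, so $\max_{i \in S} M_{ij} \leq \max_{i \in T} M_{ij}$; summing over $j$ yields $\sigma(S) \leq \sigma(T)$ immediately from the definition. No property of $M$ is needed for this step.

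The remaining case is $S = \emptyset$, where normalization forces $\sigma(\emptyset) = 0$ (adopting the convention $\max_{i \in \emptyset}(\cdot) = 0$ implicit in the normalization), and I must show $\sigma(T) \geq 0$ for any non-empty $T$. The key idea is to pick any single row $i_0 \in T$ as a lower-bound witness: for every column $j$, $\max_{i \in T} M_{ij} \geq M_{i_0 j}$, so summing across columns gives $\sigma(T) \geq \sum_{j=1}^n M_{i_0 j}$, which is non-negative by the defining property of an average non-negative segmentation function applied to row $i_0$. I don't anticipate any real obstacle; the whole lemma reduces to recognizing that monotonicity for segmentation functions is automatic except at the empty-set boundary, and the average non-negative condition was tailored precisely to patch that boundary.
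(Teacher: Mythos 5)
Your proof is correct and follows essentially the same route as the paper: the non-empty case is handled by the observation that enlarging the index set can only increase each column maximum, and the empty-set boundary is patched by the average non-negative property (the paper adds one element at a time, while you compare $S\subseteq T$ directly and lower-bound $\sigma(T)$ by a single witness row, but these are the same idea). No gaps.
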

\label{lem:sf1}
\begin{proof}
Let $S$ be a proper subset of $[m]$, and $e$ be an element in $[m]$ that is not in $S$.
If $S$ is empty, then by the average non-negative property, we have $\sigma(\{e\})=\sum_{j=1}^n M_{ej}\ge 0$.
Otherwise, by adding $e$ to $S$ we have $\max_{i\in S\cup\{e\}} M_{ij}\ge\max_{i\in S} M_{ij}$ for all $1\le j\le n$.
Therefore $\sigma(S\cup\{e\})\ge\sigma(S)$.
\qed
\end{proof}

\begin{lemma}
\label{lem:sf2}
For any non-disjoint set $S$ and $T$ and an average non-negative segmentation function $\sigma(\cdot)$,
we have $$\sigma(S)+\sigma(T)\ge\sigma(S\cup T)+\sigma(S\cap T).$$ 
This is also referred as the meta-submodular property~\cite{KPR98}.
\end{lemma}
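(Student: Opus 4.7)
My plan is to prove the inequality column by column and then sum. Since $\sigma(X)=\sum_{j=1}^n\max_{i\in X}M_{ij}$ is a sum over columns, it suffices to show that for every fixed column $j$,
\[
\max_{i\in S}M_{ij}+\max_{i\in T}M_{ij}\;\ge\;\max_{i\in S\cup T}M_{ij}+\max_{i\in S\cap T}M_{ij}.
\]

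First I would fix $j$ and write $a=\max_{i\in S}M_{ij}$, $b=\max_{i\in T}M_{ij}$, $c=\max_{i\in S\cap T}M_{ij}$. Since $S\cap T\subseteq S$ and $S\cap T\subseteq T$, monotonicity of the max operator gives $c\le a$ and $c\le b$, hence $c\le\min(a,b)$. Also $\max_{i\in S\cup T}M_{ij}=\max(a,b)$. Thus
\[
a+b=\max(a,b)+\min(a,b)\ge\max(a,b)+c,
\]
which is exactly the per-column inequality. Summing over $j=1,\dots,n$ yields $\sigma(S)+\sigma(T)\ge\sigma(S\cup T)+\sigma(S\cap T)$.

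The only subtlety — and the reason the hypothesis that $S$ and $T$ are non-disjoint appears in the statement — is that the per-column argument requires $S\cap T\neq\emptyset$ so that $c=\max_{i\in S\cap T}M_{ij}$ is well defined as an entry of $M$; otherwise we would be comparing $a+b$ against $\max(a,b)+\sigma(\emptyset)/\text{(per column)}$ and would need the average non-negativity to control signs. Under the non-disjointness assumption, no such appeal is needed, and in particular the lemma does not actually use the average non-negative property. This is the main (and only) obstacle, and it is resolved by the non-disjointness hypothesis in the statement.
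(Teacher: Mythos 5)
Your proof is correct and follows essentially the same route as the paper's: both reduce to the per-column inequality $\max_{i\in S}M_{ij}+\max_{i\in T}M_{ij}\ge\max_{i\in S\cup T}M_{ij}+\max_{i\in S\cap T}M_{ij}$ and sum over $j$, with your $\max/\min$ identity being just a rephrasing of the paper's choice of the maximizing element $e\in S\cup T$. Your side remarks---that non-disjointness is needed only so that $\max_{i\in S\cap T}M_{ij}$ is defined, and that average non-negativity is not used in this lemma---are also accurate.
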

\begin{proof}
For any non-disjoint set $S$ and $T$ and an average non-negative segmentation function $\sigma(\cdot)$, we let $\sigma_j(S)=\max_{i\in S} M_{ij}$. We show a stronger statement that for any $j\in [n]$, we have
$$\sigma_j(S)+\sigma_j(T)\ge\sigma_j(S\cup T)+\sigma_j(S\cap T).$$
Let $e$ be an element in $S\cup T$ such that $M_{ej}$ is maximum. 
Without loss of generality, assume $e\in S$, then $\sigma_j(S)=\sigma_j(S\cup T)=M_{ej}$. Since $S\cap T\subseteq T$, we have $\sigma_j(T)\ge\sigma_j(S\cap T)$. Therefore,
$$\sigma_j(S)+\sigma_j(T)\ge\sigma_j(S\cup T)+\sigma_j(S\cap T).$$
Summing over all $j\in [n]$, we have
$$\sigma(S)+\sigma(T)\ge\sigma(S\cup T)+\sigma(S\cap T)$$
as desired. 
\qed
\end{proof}

%The following proposition is immediate by the above two lemmas.
\begin{proposition}
\label{prop:subclass3}
Any average non-negative segmentation function is weakly submodular.
\end{proposition}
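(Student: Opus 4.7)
The plan is to reduce the proof to the two lemmas just established and the argument already used in Proposition~\ref{prop:subclass1}, splitting on whether $S$ and $T$ are disjoint. The observation is that Lemma~\ref{lem:sf2} only gives the standard submodularity inequality for non-disjoint $S$ and $T$, so an average non-negative segmentation function is not necessarily submodular on all pairs. However, the weak submodularity inequality degenerates in the disjoint case and becomes easy, so this is not actually an obstacle.

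First I would dispose of the disjoint case $S \cap T = \emptyset$. Here $|S \cap T| = 0$ and $\sigma(S \cap T) = \sigma(\emptyset) = 0$, so the right-hand side of the weak submodularity inequality is zero. By Lemma~\ref{lem:sf1}, $\sigma$ is monotone, hence $\sigma(S), \sigma(T) \geq \sigma(\emptyset) = 0$, making the left-hand side $|T|\sigma(S) + |S|\sigma(T)$ non-negative. The inequality thus holds trivially.

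Next I would handle the non-disjoint case $S \cap T \neq \emptyset$. Here Lemma~\ref{lem:sf2} guarantees that the ordinary submodular inequality $\sigma(S)+\sigma(T) \geq \sigma(S\cup T)+\sigma(S\cap T)$ holds for this particular pair. Combined with the monotonicity from Lemma~\ref{lem:sf1}, these are precisely the two ingredients used in the proof of Proposition~\ref{prop:subclass1}, and that proof operated pair-by-pair rather than using submodularity on all pairs. So I would replay exactly that calculation: assuming $|S| \leq |T|$, rewrite $|T|\sigma(S) + |S|\sigma(T) = |S|[\sigma(S)+\sigma(T)] + (|T|-|S|)\sigma(S)$, use the pairwise submodular inequality on the first bracket and monotonicity $\sigma(S) \geq \sigma(S \cap T)$ on the second term, and then use monotonicity $\sigma(S \cup T) \geq \sigma(S \cap T)$ once more to redistribute the coefficients and arrive at $|S\cap T|\sigma(S\cup T) + |S\cup T|\sigma(S\cap T)$.

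The main potential obstacle is convincing oneself that the proof of Proposition~\ref{prop:subclass1} really only uses the submodular inequality on the specific pair $(S,T)$ under consideration, rather than global submodularity; once this is checked, essentially no new work is required beyond the disjoint-case observation.
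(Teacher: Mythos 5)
Your proposal is correct and follows essentially the same route as the paper: split on whether $S$ and $T$ are disjoint, handle the non-disjoint case by combining Lemma~\ref{lem:sf2} with the pairwise calculation from Proposition~\ref{prop:subclass1}, and dispose of the disjoint case directly (the paper uses monotonicity to bound $|S\cup T|\sigma(S\cap T)$ rather than invoking $\sigma(\emptyset)=0$, but this is an immaterial variation given the standing normalization assumption). Your explicit check that the Proposition~\ref{prop:subclass1} argument uses the submodular inequality only on the single pair $(S,T)$ is a point the paper leaves implicit, and it is worth making.
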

\begin{proof}
For any two set $S$ and $T$ and an average non-negative segmentation function $\sigma(\cdot)$, if $S$ and $T$ are non-disjoint then by Lemma~\ref{lem:sf2},
$S$ and $T$ satisfy the submodular property and hence they satisfy the weakly submodular property by Proposition~\ref{prop:subclass1}.
If $S$ and $T$ are disjoint, then $|S\cap T|=0$, and $|S\cup T|=|S|+|T|$.
By monotonicity property in Lemma~\ref{lem:sf1}, we also have
$\sigma(S)\ge\sigma(S\cap T)$ and $\sigma(T)\ge\sigma(S\cap T)$.
Therefore,
$$|S\cap T|\sigma(S\cup T)+|S\cup T|\sigma(S\cap T)\le |T|\sigma(S\cap T)+|S|\sigma(S\cap T)\le |T|\sigma(S)+|S|\sigma(T);$$
the weakly submodular property is also satisfied.
\qed
\end{proof}

\subsection{Small Powers of the Cardinality of a Set}

Clearly, for any positive integer $k$, the functions $f(S) = |S|^k$ can be computed in
time $O(\log k)$. However, given Lemma \ref{lem:nlc} below, it
is still useful to know what simple functions can be used in conjuction with
other submodular and weakly submodular functions. 

It is immediate  to see that the functions 
%is weakly submodular for 
$f(S)=|S|^0$ and $f(S)=|S|^1$ are linear and hence submodular. We will
show that the square and the cube of the cardinality of a set are
also weakly submodular. 
%now give an example that shows $f(S)=|S|^4$ is not
%weakly submodular.

%\subsubsection{The Square of the  Cardinality of a Set}
%For a given set $S$, let $f(S)=|S|^2$. We show that this function is also weakly submodular.

\begin{proposition}
\label{prop:subclass4}
The square of cardinality of a set is weakly submodular.
\end{proposition}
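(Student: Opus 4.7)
The plan is to reduce the weak submodularity inequality for $f(S) = |S|^2$ to a purely numerical inequality in the cardinalities $|S|$, $|T|$, and $|S \cap T|$, and then factor it into a manifestly non-negative quantity.

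First, I would introduce the shorthand $s = |S|$, $t = |T|$, $c = |S \cap T|$, so that $|S \cup T| = s + t - c$. Substituting these into the defining inequality for weak submodularity, the left-hand side becomes $t s^2 + s t^2 = st(s+t)$, while the right-hand side becomes $c(s+t-c)^2 + (s+t-c)c^2 = c(s+t-c)(s+t)$. The common factor $(s+t)$ appears on both sides, and since $s+t \geq 0$ it can be cancelled, leaving the claim $st \geq c(s+t-c)$.

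The last step is to observe that $st - c(s+t-c) = (s-c)(t-c)$, which is non-negative because $c = |S \cap T|$ satisfies $c \leq s$ and $c \leq t$. This completes the verification.

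I do not anticipate any real obstacle: the main content is just the algebraic factorization that turns the inequality into $(|S|-|S\cap T|)(|T|-|S\cap T|) \geq 0$, and this factorization is what makes $|S|^2$ fit the weakly submodular framework even though it is strictly supermodular in the usual sense.
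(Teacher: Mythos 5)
Your proof is correct and is essentially the same as the paper's: the paper parametrizes by $a=|S\setminus T|$, $b=|T\setminus S|$, $c=|S\cap T|$ and drops the term $(a+b+2c)\,ab\ge 0$, which is exactly your $(s+t)(s-c)(t-c)\ge 0$ after the change of variables $a=s-c$, $b=t-c$. The only cosmetic difference is that you cancel the common factor $s+t$ explicitly (harmless, since the case $s+t=0$ is trivial) while the paper keeps it throughout.
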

\begin{proof}
Given two subsets $S$ and $T$ of $U$, let $a=|S\setminus T|$, $b=|T\setminus S|$ and $c=|S\cap T|$.
\begin{eqnarray*}
&  &|T|f(S)+|S|f(T)\\
&=&(b+c)(a+c)^2+(a+c)(b+c)^2\\ 
&=&(a+b+2c)(b+c)(a+c)\\
&=&(a+b+2c)(ab+ac+bc+c^2)\\
&\ge&(a+b+2c)(ac+bc+c^2)\\
&=&(a+b+2c)c(a+b+c)\\
&=&c(a+b+c)^2+(a+b+c)c^2\\
&=&|S\cap T|f(S\cup T)+|S\cup T|f(S\cap T).
\end{eqnarray*}
\qed
\end{proof}

%\subsubsection{The Cube of the Cardinality of a Set}
%For a given set $S$, let $f(S)=|S|^3$. We show that this function is also weakly submodular.

\begin{proposition}
\label{prop:subclass5}
The cube of cardinality of a set is weakly submodular.
\end{proposition}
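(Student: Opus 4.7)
The plan is to mimic the calculation used for Proposition~\ref{prop:subclass4}, parametrizing everything in terms of $a=|S\setminus T|$, $b=|T\setminus S|$, $c=|S\cap T|$, so that $|S|=a+c$, $|T|=b+c$, $|S\cup T|=a+b+c$, and $|S\cap T|=c$. The inequality to prove then becomes
\[
(b+c)(a+c)^3 + (a+c)(b+c)^3 \;\ge\; c(a+b+c)^3 + (a+b+c)c^3.
\]

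The first move is to factor both sides: the left-hand side equals $(a+c)(b+c)\bigl[(a+c)^2+(b+c)^2\bigr]$ and the right-hand side equals $c(a+b+c)\bigl[(a+b+c)^2+c^2\bigr]$. Introducing the shorthand $s=a+b+c$, I would use the identities $(a+c)+(b+c)=s+c$ and $(a+c)(b+c)=sc+ab$ to rewrite
\[
(a+c)^2+(b+c)^2 = (s+c)^2 - 2(sc+ab) = s^2+c^2-2ab.
\]
So the left-hand side becomes $(sc+ab)(s^2+c^2-2ab)$, while the right-hand side is simply $cs(s^2+c^2)$.

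The final step is to expand the difference LHS$-$RHS. Distributing gives
\[
sc(s^2+c^2) + ab(s^2+c^2) - 2sc\cdot ab - 2a^2b^2 - cs(s^2+c^2)
= ab\bigl[(s-c)^2 - 2ab\bigr]
\]
after collecting the $ab$ terms and noting $s^2+c^2-2sc=(s-c)^2$. Since $s-c=a+b$, the bracket equals $(a+b)^2-2ab=a^2+b^2$, so the difference simplifies to $ab(a^2+b^2)\ge 0$, which is the desired conclusion.

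I do not expect a real obstacle here: the only thing to watch is the bookkeeping, and the main sanity check is that the cubic analog of the square computation still leaves a manifestly non-negative residual. The reason it works cleanly is the factorization of $x^3y+xy^3=xy(x^2+y^2)$ on the left, which exposes the same $(sc+ab)$ versus $sc$ comparison that drove the square case.
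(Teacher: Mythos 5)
Your proposal is correct and follows essentially the same route as the paper: the same parametrization by $a=|S\setminus T|$, $b=|T\setminus S|$, $c=|S\cap T|$, the same factorization of both sides into $(sc+ab)(s^2+c^2-2ab)$ versus $cs(s^2+c^2)$, and the same non-negative residual $ab(a^2+b^2)$. The algebra checks out at every step.
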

\begin{proof}
Given two subsets $S$ and $T$ of $U$, let $a=|S\setminus T|$, $b=|T\setminus S|$ and $c=|S\cap T|$.
\begin{eqnarray*}
&  &|T|f(S)+|S|f(T)\\
&=&(b+c)(a+c)^3+(a+c)(b+c)^3\\ 
&=&(a^2+b^2+2c^2+2ac+2bc)(b+c)(a+c)\\
&=&[(a+b+c)^2+c^2-2ab][ab+c(a+b+c)]\\
&=&[(a+b+c)^2+c^2][c(a+b+c)]+ab[(a+b+c)^2+c^2]-2a^2b^2-2abc(a+b+c)\\
&=&c(a+b+c)^3+c^3(a+b+c)+ab[(a+b+c)^2+c^2-2ab-2c(a+b+c)]\\
&=&|S\cap T|f(S\cup T)+|S\cup T|f(S\cap T)+ab(a^2+b^2+c^2+2ab+2ac+2bc+c^2-2ab-2ac-2bc-2c^2)\\
&=&|S\cap T|f(S\cup T)+|S\cup T|f(S\cap T)+ab(a^2+b^2)\\
&\ge&|S\cap T|f(S\cup T)+|S\cup T|f(S\cap T).
\end{eqnarray*}
\qed
\end{proof}
It is easy to see that the function is weakly submodular for $f(S)=|S|^0$ and $f(S)=|S|^1$. We now give an example that shows $f(S)=|S|^4$ is not
weakly submodular. 

\subsubsection{Higher powers}
\begin{proposition} $f(S)=|S|^4$ is not
weakly submodular.
\end{proposition}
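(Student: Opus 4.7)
My plan is to produce an explicit small counterexample. Following the parameterization used in the previous two propositions, I set $a = |S \setminus T|$, $b = |T \setminus S|$, and $c = |S \cap T|$, so that the weakly submodular inequality for $f(S) = |S|^4$ becomes
$$(b+c)(a+c)^4 + (a+c)(b+c)^4 \;\geq\; c(a+b+c)^4 + (a+b+c)\,c^4.$$
The intuition is that the right-hand side contains the term $c(a+b+c)^4$, and since $(a+b+c)^4$ grows strictly faster than both $(a+c)^4$ and $(b+c)^4$ when $a,b,c$ are all positive, the right-hand side should dominate whenever the intersection $c$ is of the same order as $a$ and $b$. This is precisely the regime where the proof of the cube case (Proposition~\ref{prop:subclass5}) was tight, with its slack term $ab(a^2+b^2)$ only barely compensating; bumping the exponent up to $4$ should tip the inequality the wrong way.

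I would first try the minimal symmetric instance $a = b = c = 1$, which corresponds to taking $S = \{u,v\}$ and $T = \{u,w\}$ for three distinct elements $u,v,w$ (so $|S| = |T| = 2$, $|S \cap T| = 1$, $|S \cup T| = 3$). A direct computation of $|T|f(S) + |S|f(T)$ versus $|S\cap T|f(S\cup T) + |S \cup T|f(S \cap T)$ then either produces the counterexample immediately or, if the margin happens to be too close, I would scale $c$ up and expand the polynomial difference in $c$ to identify exactly where it turns negative. The main obstacle is purely in picking the right regime of parameters: the inequality holds trivially when $c = 0$ (both sides reduce as in the submodular disjoint case), so the counterexample must involve overlapping sets, and once one accepts that the degree-$4$ factor on the right-hand side can outpace the left, verification collapses to a single arithmetic check.
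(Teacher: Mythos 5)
Your approach is essentially the same as the paper's: exhibit an explicit numerical counterexample in the $(a,b,c)$ parameterization (the paper uses $a=b=4$, $c=1$, giving $6250 < 6570$). Your first candidate $a=b=c=1$ already succeeds, since the left side is $2\cdot 2^4 + 2\cdot 2^4 = 64$ while the right side is $1\cdot 3^4 + 3\cdot 1^4 = 84$, so the fallback step of scaling $c$ is not needed.
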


\begin{proof} Given two subsets $S$ and $T$ of $U$, let $a=|S\setminus T|$, $b=|T\setminus S|$ and $c=|S\cap T|$. Suppose $a=4, b=4, c=1$.

$$|T|f(S)+|S|f(T)=(b+c)(a+c)^4+(a+c)(b+c)^4=6250$$
On the other hand, we have
$$|S\cap T|f(S\cup T)+|S\cup T|f(S\cap T)=c(a+b+c)^4+c^4(a+b+c)=9^4+9=6570$$
Therefore, the function is not weakly submodular. 
\end{proof}

Similarly, one can see that  $f(S| = |S|^k$ is not weakly submodular
for all intergers $k \geq 4$. 

\subsection{Linear combinations of weakly submodular functions}
Next we show a basic but important property of weakly submodular functions.
\begin{lemma}
\label{lem:nlc}
Non-negative linear combinations of weakly submodular functions are weakly submodular.
\end{lemma}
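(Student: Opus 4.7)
The plan is to observe that the defining inequality for weak submodularity is \emph{linear} in the function $f$: the coefficients $|T|$, $|S|$, $|S\cap T|$, and $|S\cup T|$ depend only on the sets $S$ and $T$, not on $f$ itself. So the argument should be a one-line linearity argument, after first checking the normalization and non-negativity conditions that are part of the definition.

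More concretely, I would fix weakly submodular functions $f_1,\ldots,f_k$ and non-negative scalars $\alpha_1,\ldots,\alpha_k \ge 0$, and let $g = \sum_{i=1}^k \alpha_i f_i$. First I would verify that $g$ is normalized and non-negative: $g(\emptyset) = \sum_i \alpha_i f_i(\emptyset) = 0$ since each $f_i$ is normalized, and $g(S) \ge 0$ for all $S$ since each $\alpha_i \ge 0$ and each $f_i(S) \ge 0$.

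Then for any two subsets $S,T \subseteq U$, weak submodularity of each $f_i$ gives
$$|T|f_i(S) + |S|f_i(T) \;\ge\; |S\cap T| f_i(S\cup T) + |S\cup T| f_i(S\cap T).$$
Multiplying by $\alpha_i \ge 0$ preserves the inequality, and summing over $i = 1,\ldots,k$ yields
$$|T|\,g(S) + |S|\,g(T) \;\ge\; |S\cap T|\,g(S\cup T) + |S\cup T|\,g(S\cap T),$$
which is exactly the weak submodularity inequality for $g$.

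There is no real obstacle here; the only subtlety worth mentioning is why non-negativity of the coefficients $\alpha_i$ is essential. If some $\alpha_i$ were negative, the corresponding weak submodularity inequality would flip direction after multiplication, and the sum need no longer satisfy the required inequality (and moreover $g$ could fail to be non-negative). Thus the proof crucially uses the linearity of the cardinality-weighted inequality together with the sign condition on the coefficients.
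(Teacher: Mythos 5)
Your proof is correct and follows essentially the same linearity argument as the paper: multiply each function's weak submodularity inequality by its non-negative coefficient and sum. The only difference is that you also explicitly verify normalization and non-negativity of the combination, which the paper omits but which is a harmless (and arguably welcome) addition.
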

\begin{proof}
Consider weakly submodular functions $f_1,f_2,\dots,f_n$ and non-negative numbers $\alpha_1, \alpha_2,\dots,\alpha_n$. 
Let $g(S)=\sum_{i=1}^n \alpha_i f_i(S)$, then for any two set $S$ and $T$, we have
\begin{eqnarray*}
&  &|T|g(S)+|S|g(T)\\
&=&|T|\sum_{i=1}^n\alpha_if_i(S)+|S|\sum_{i=1}^n\alpha_if_i(T)\\ 
&=&\sum_{i=1}^n\alpha_i[|T|f_i(S)+|S|f_i(T)]\\
&\ge&\sum_{i=1}^n\alpha_i[|S\cap T|f_i(S\cup T)+|S\cup T|f_i(S\cap T)]\\
&=&|S\cap T|\sum_{i=1}^n\alpha_if_i(S\cup T)+|S\cup T|\sum_{i=1}^n\alpha_if_i(S\cap T)\\
&=&|S\cap T|g(S\cup T)+|S\cup T|g(S\cap T).
\end{eqnarray*}
Therefore, $g(S)$ is weakly submodular.
\qed
\end{proof}

\begin{corollary}
\label{cor:ca-matroid}
The welfare maximization
problem (also known as the maximization problem for combinatorial auctions) 
for agents with weakly submodular valuations is a special case
of the maximization of a weakly submodular function subject to a 
partition matroid.

\end{corollary}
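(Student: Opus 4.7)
The plan is to use the standard ``lifting'' reduction that encodes allocations as subsets of an enlarged ground set. Let $U$ be the set of items and $[m]$ the set of agents with valuations $v_1, \ldots, v_m$. Introduce $W = U \times [m]$ and interpret $(u,i) \in E$ as assigning item $u$ to agent $i$. For each item $u$ let $B_u = \{(u,1), \ldots, (u,m)\}$, and let $\mathcal{M}$ be the partition matroid on $W$ whose partition blocks are the $B_u$, each with capacity one. Independent sets of $\mathcal{M}$ are exactly the feasible allocations, in which every item goes to at most one agent. Define $f : 2^W \to \mathbb{R}_{\geq 0}$ by $f(E) = \sum_{i=1}^m v_i(\pi_i(E))$, where $\pi_i(E) = \{u \in U : (u,i) \in E\}$ is the bundle agent $i$ receives under $E$. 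By construction, maximizing $f$ over the independent sets of $\mathcal{M}$ coincides with the welfare maximization problem for the agents with valuations $v_1, \ldots, v_m$.

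To conclude that welfare maximization is a special case of our framework, what remains is to verify that $f$ is weakly submodular on $W$. Writing $f = \sum_{i=1}^m f_i$ with $f_i(E) := v_i(\pi_i(E))$, Lemma~\ref{lem:nlc} on non-negative linear combinations reduces the task to showing that each $f_i$ is weakly submodular on $W$, given that $v_i$ is weakly submodular (and monotone) on $U$.

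The main obstacle is this last step, the weak submodularity of the ``lifted'' function $f_i$. For $S, T \subseteq W$, set $X = \pi_i(S)$ and $Y = \pi_i(T)$; since $\pi_i$ commutes with union and intersection, $\pi_i(S \cup T) = X \cup Y$ and $\pi_i(S \cap T) = X \cap Y$, so the required inequality reduces to
$$|T|\, v_i(X) + |S|\, v_i(Y) \geq |S \cap T|\, v_i(X \cup Y) + |S \cup T|\, v_i(X \cap Y).$$
The cardinalities here are measured on $W$ and can strictly exceed their counterparts $|X|, |Y|, |X \cap Y|, |X \cup Y|$ because of elements $(u, j) \in S \cup T$ with $j \neq i$, which contribute to the coefficients without affecting any $v_i$-value. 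My approach is to start with the weak submodular inequality for $v_i$ applied to $(X, Y)$ and then book-keep these extra coefficients, using the combinatorial fact that $|S \cap T| - |X \cap Y| \leq \min(|S| - |X|, |T| - |Y|)$ (elements of $S \cap T$ outside $U \times \{i\}$ must lie in both $S$ and $T$ outside $U \times \{i\}$) together with the monotonicity of $v_i$ to argue that the excess on the left dominates the excess on the right. This coefficient accounting is where I expect the real work to lie; in fact, care will be needed because the slack in the base weak submodularity inequality for $v_i$ must be large enough to absorb the extra $v_i(X \cup Y)$ terms that appear on the right from the ``off-coordinate'' elements.
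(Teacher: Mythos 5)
Your reduction is the same as the paper's: lift to the ground set $U\times[m]$, take the partition matroid with one block per item, write the welfare as $f=\sum_i f_i$ with $f_i=v_i\circ\pi_i$, and invoke Lemma~\ref{lem:nlc}. Where you differ is that the paper simply asserts the remaining step --- that each lifted summand $f_i$ is weakly submodular on the enlarged ground set --- while you correctly identify it as the crux and propose to close it by bookkeeping the inflated cardinality coefficients. That step cannot be closed: the pullback $v_i\circ\pi_i$ of a weakly submodular function along $\pi_i$ need not be weakly submodular, and it fails for exactly the reason you were worried about. Take $v_i=f_2$ from Proposition~\ref{prop:weak-not-super} (or the metric dispersion function), items $x,y,z$, agents $i\ne j$, and set $S=\{(x,i),(z,j)\}$, $T=\{(y,i),(z,j)\}$. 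Then $f_i(S)=f_i(T)=f_i(S\cap T)=0$ while $f_i(S\cup T)=v_i(\{x,y\})>0$, and since $|S\cap T|=1$ the weak submodularity inequality for $f_i$ reads $0\ge v_i(\{x,y\})$, which is false. The shared off-coordinate element $(z,j)$ raises $|S\cap T|$ above $|\pi_i(S)\cap\pi_i(T)|=0$ and thereby switches on the coefficient of $v_i(X\cup Y)$ with nothing on the left-hand side to absorb it; your bound $|S\cap T|-|X\cap Y|\le\min(|S|-|X|,|T|-|Y|)$ is true but useless when $v_i(X)=v_i(Y)=v_i(X\cap Y)=0<v_i(X\cup Y)$. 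The same pair of sets also defeats the full welfare function $f$, and both $S$ and $T$ are independent in the partition matroid, so one cannot escape by restricting attention to feasible sets.

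So the gap you flagged is not a technical inconvenience but a genuine obstruction, and it afflicts the paper's own one-sentence justification equally: $v_i$ being weakly submodular ``on the universe $\pi_i(U')$'' does not make $f'_i$ weakly submodular on all of $U'$, which is what Lemma~\ref{lem:nlc} requires. The lifting argument does go through when each $v_i$ is monotone submodular: then $f_i$ is monotone submodular on $U\times[m]$ because $\pi_i$ commutes with $\cup$ and $\cap$ and the cardinality coefficients never enter, after which Proposition~\ref{prop:subclass1} applies. For merely weakly submodular valuations, establishing the corollary would require either a different encoding of allocations or a weakening of the claim (e.g., verifying only the instances of the inequality actually used in the local search analysis); as written, neither your argument nor the paper's proves it.
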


\begin{proof}

%This is a well known reduction that we apply to the class of 
%weakly submodular functions. 
In the maximum welfare problem, $n$ agents $A = \{1, \ldots, n\}$ 
have valuation functions 
$v_i :=  U \rightarrow \Re$. A feasible allocation is a disjoint allocation 
of subsets $S_i$ to each agent ($1 \leq i \leq n$) so as to maximize  
the social welfare function $f(S) = \sum_{i = 1}^{n} v_i(S_i)$. 
%Given that the class of weakly submodular
%functions is closed under linear combinations, $f(S)$ is a weakly submodular
%function when all the valuations $v_i$ are weakly submodular. 
It is well known then how to view the disjointness constraint as  
a partition matroid constraint. Namely, we consider a universe
$U' = A \times U$ where the elements 
of $U'$ are partitioned into blocks $B_u = \{i,u)| i \in A\}$ 
for each $u \in U$. For $S' = \cup B'_u$, we let the partition matroid 
be defined by the 
independence  
condition that a subset $S' \subseteq U'$ is independent iff $|B'_u| \leq 1$;
that is, it does 
not contain any two elements $(i,u)$ and $(i',u)$ for some $u \in U$ and 
$i \neq i'$. Letting $\pi_i(S') = \{u: (i,u) \in S'\}$, 
% for any element $(i,u) \in U'$ and this
%naturally induces the projection $\pi(i,S') = \{u: (i,u) \in S'\}$. 
 define $f'_i(S) = v_i(\pi_i(S')$ and 
$f'(S') = \sum_{i = 1}^{n} v_i(\pi_i(S'))$ for any subset $S' \subseteq U'$. 
Given that each $v_i$ is weakly submodular on the universe $\pi_i(U')$ and
that  the class 
of weakly submodular
functions is closed under linear combinations, $f'(S')$ is a weakly submodular
function when all the valuations $v_i$ are weakly submodular. 

\qed 
\end{proof}

We now show two more examples of weakly submodular function using Lemma~\ref{lem:nlc}.

\subsection{The Objective Function of Max-Sum Diversification}
\begin{corollary}
\label{cor:msd}
The objective function of the max-sum diversification problem, 
$f(S) = g(S) + \sum_{\{u,v\} \subseteq S} d(u,v)$, is weakly submodular 
when $g$ is submodular (or weakly submodular) and $d$ is a metric.
\end{corollary}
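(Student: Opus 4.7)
The plan is to combine three results already established in the paper: Proposition~\ref{prop:subclass1} (monotone submodular functions are weakly submodular), Proposition~\ref{prop:subclass2} (the sum of metric distances $d(S) = \sum_{\{u,v\} \subseteq S} d(u,v)$ is weakly submodular), and Lemma~\ref{lem:nlc} (weak submodularity is preserved under non-negative linear combinations). The target function $f(S) = g(S) + d(S)$ is literally a sum (with coefficients $1$ and $1$) of two functions, each of which we already know to be weakly submodular.

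First I would dispense with the distance term: by Proposition~\ref{prop:subclass2}, $d(S) = \sum_{\{u,v\} \subseteq S} d(u,v)$ is weakly submodular whenever $d$ is a metric. This step is immediate and requires no further work. Next I would handle the quality term: if $g$ is (monotone) submodular, Proposition~\ref{prop:subclass1} says $g$ is weakly submodular; if $g$ is already weakly submodular, there is nothing to prove. In either case, $g$ belongs to the class of weakly submodular functions, so that both summands of $f$ sit inside this class.

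Finally I would invoke Lemma~\ref{lem:nlc} with $\alpha_1 = \alpha_2 = 1$, $f_1 = g$, and $f_2 = d$, to conclude that $f(S) = g(S) + d(S)$ is weakly submodular. No delicate calculation is needed at this stage, since the lemma does all the bookkeeping.

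There is essentially no real obstacle here; the corollary is a one-line consequence of two propositions and the closure lemma, and the only caveat worth flagging in the write-up is that Proposition~\ref{prop:subclass1} is applied in its monotone form, so $g$ should be understood to be monotone (which is the standard setting for max-sum diversification, and is consistent with the monotone focus declared earlier in Section~\ref{sec:examples}).
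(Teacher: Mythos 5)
Your proof is correct and matches the paper's own argument exactly: the paper likewise cites Proposition~\ref{prop:subclass1}, Proposition~\ref{prop:subclass2}, and Lemma~\ref{lem:nlc} to conclude immediately. Your added caveat about applying Proposition~\ref{prop:subclass1} in its monotone form is a reasonable clarification but does not change the route.
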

\begin{proof}
This follows immediate from Proposition~\ref{prop:subclass1} and \ref{prop:subclass2} and Lemma~\ref{lem:nlc}. 
\qed
\end{proof}

\subsection{Restricted Polynomial Function on the Cardinality of a Set}
\begin{corollary}
\label{cor:msd}
For polynomial function on the cardinality of a set, if the degree is less than four and coefficients are all non-negative, then the function is weakly submodular.
\end{corollary}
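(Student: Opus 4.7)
The plan is to observe that a polynomial in $|S|$ of degree less than four with non-negative coefficients can be written as $p(|S|) = a_0 + a_1 |S| + a_2 |S|^2 + a_3 |S|^3$ with each $a_i \geq 0$, and then express this as a non-negative linear combination of functions that the excerpt has already shown to be weakly submodular. This reduces the corollary to a direct invocation of previously established results, so there is no real obstacle; the proof is essentially a bookkeeping exercise.

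First I would handle each monomial $f_k(S) = |S|^k$ individually. For $k=0$, the function is the constant $1$ on non-empty sets, which is submodular (indeed it is the $k=1$ minimum-cardinality function already discussed), and for $k=1$ the function is linear, hence submodular; both are weakly submodular by Proposition~\ref{prop:subclass1}. Strictly speaking, we should take $f(S) = |S|^0$ to be the linear function identically $1$ (so that normalization $f(\emptyset)=0$ is respected via the convention used here), or simply absorb $a_0$ into the constant term with the understanding that it does not affect the inequality. For $k=2$ and $k=3$, weak submodularity is exactly Propositions~\ref{prop:subclass4} and~\ref{prop:subclass5}.

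Then I would invoke Lemma~\ref{lem:nlc}, which states that non-negative linear combinations of weakly submodular functions are weakly submodular. Since $p(|S|) = a_0 f_0(S) + a_1 f_1(S) + a_2 f_2(S) + a_3 f_3(S)$ with all $a_i \geq 0$, the lemma immediately yields that $p(|\cdot|)$ is weakly submodular, completing the proof.

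The only mildly subtle point is the normalization convention: one must either interpret the $a_0$ term as attached to the indicator of non-emptiness (so that $f(\emptyset) = 0$) or restrict to polynomials with $a_0 = 0$. Either way, once the conventions are fixed, the result is an immediate corollary of Propositions~\ref{prop:subclass1}, \ref{prop:subclass4}, \ref{prop:subclass5}, and Lemma~\ref{lem:nlc}, with no additional computation required.
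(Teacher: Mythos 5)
Your proposal is correct and follows essentially the same route as the paper: decompose the polynomial into the monomials $|S|^0,\dots,|S|^3$, invoke the established weak submodularity of each (Propositions~\ref{prop:subclass1}, \ref{prop:subclass4}, \ref{prop:subclass5}), and close under non-negative linear combinations via Lemma~\ref{lem:nlc}. Your remark on the normalization of the constant term is a careful touch the paper glosses over, but it does not change the argument.
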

\begin{proof}
This follows immediate from Proposition~\ref{prop:subclass4} and \ref{prop:subclass5} and Lemma~\ref{lem:nlc}. 
\qed
\end{proof}

\section{Related Work}
\label{sec:related-work}

Recently, there has been other     
generalizations of monotone submodular functions\footnote{We note that the class of weakly submodular functions was introduced in 
the PHD thesis of Ye \cite{Ye13} and followed from observations made with 
 regard to the diversifcation problem in \cite{BLY12}. As such this class 
was studied independently from the work 
relating to supermodular degree and the MPH-$k$ hierarchy that will 
now be discussed.}. In particular with regard to 
combinatorial auctions, Feige and Izsak \cite{FeigeI13} 
defined the concept of the {\it supermodularity degree} of a set function as a 
measure of the degree of complementarity. Intuitvely, 
for each item $u$, its supermodular
degree is the number of other items $v$ that increase the marginal 
value of  
$u$ with respect to some subset not containing $u$.
This induces a supermodular dependency graph and the supermodular degree
of an item is its degree in this dependency graph. 
The supermodular degree of a set function is the maximum  
of the item supermodular degrees. Set functions with supermodular degree 0 
are precisely the  
submodular functions and every set function on a universe $U$ 
 has supermodular degree at most $|U|-1$. 
Feige and Izsak consider the welfare maximization
 problem 
%\footnote{This is also known as the combinatorial auction problem. Here
%each agent has a valuation function for each subset and a feasible
%allocation is a disjoint union of subsets given to the agents.  The
%social welfare objective  is
%the sum of all agent valuations for the subset they receive} 
when each agent 
has a valuation function with supermodular degree at most $d$. 
Amongst their results, they show that given the 
supermodular dependency graph, and a value oracle to access the valuation
function of each agent, a greedy algorithm 
approximates the welfare maximization problem 
to within a factor \footnote{We are stating all of our approximation ratios 
to be greater than or equal to 1 whereas Feige and Izsak use fractional
approximation ratios} of $d+2$. Feldman and Izsak \cite{FeldmanI14}
consider the maximization of set functions with supermodular degree $d$ 
degree subject to independence in a matroid and more generally to
independence in a $k$-extendible system as defined by Mestre \cite{Mestre06}.
% Jenkyn's \cite{Jenkyns76} $k$-independence 
They show that a natural greedy algorithm achieves approximation ratio 
$k(d+1)+1$ assuming
a value oracle (for accessing the set function) and an independence 
oracle (for determining if a set is indepedendent in ${\cal I}$).

It is easy to see that the class of weakly submodular functions does
not correspond to functions having bounded supermodular degree. 
On the one hand, the function $f_2$ in Proposition \ref{prop:weak-not-super}
is weakly submodular and has supermodular degree $|U| - 1$ for
any universe $U$ with at least 3 elements. 
Furthermore, Feige et al \cite{FeigeFIILS14} show that 
there are instances of the metric sum dispersion problem 
(even with unit distance 
on the complete graph $G = (U, U \times U)$ ) that does not have bounded 
supermodular degree. In fact, Feige et al show that for this instance
of the dispersion function, a function of supermodular degree $d$ 
cannot provide an approximation better than $\frac{|U|}{d+1} -1$. 
On the other  hand, we have the following observation: 

\begin{proposition}
%\begin{itemize} 

%\item For every $d$, there are 
%weakly submodular functions having supermodular degree at 
% least $d$. Namely, for a universe $U = \{a_1, a_2, \ldots, a_n\}$, 
%let $f(S) = B > 0$ iff $|S| \geq 2$. 

%\begin{itemize}
%\item We can verify that $f$ is weakly submodular by considering the
%cases for the possible cardinalities of the sets in the 
%weakly submodular definition; that is, when 
%say $|S| \leq |T|$ we consider the cases $|S| < 2$ and $|S| \geq 2$.  
%\item Any element has the largest possible supermodular degree $|U| - 1$.
%\end{itemize}

There are simple functions having supermodular degree 1
that are not weakly submodular. Namely, for the universe $U = \{a_1,a_2,b\}$,
let $f(S) = B > 0$ if $\{a_1,a_2\} \subseteq S$ and 0 otherwise. 
Letting  $S = \{a_1,b_1\}$ and  $T = \{a_2,b_1\}$, we have 
             \begin{itemize}
\item $|T| f(S) + |S| f(T) = 0$ 
\item
                   $|S \cap T| f(S \cup T) + |S \cup T| f(S \cap T)$
                    = B
\end{itemize}
which violates the definition of weak submodularity.
\qed
\end{proposition}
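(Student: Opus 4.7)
The plan is to verify two things: (i) the proposed function $f$ has supermodular degree exactly 1 in the sense of Feige and Izsak, and (ii) the specified choice of $S$ and $T$ really does violate weak submodularity. The numerical inequality in (ii) is essentially stated, so the only genuine work is (i).

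For (i), I would first recall the definition: the supermodular dependency of an element $u$ consists of those $v \neq u$ such that adding $v$ to some set not containing $u$ strictly increases the marginal contribution of $u$. Applied to our $f$, the marginal of $a_1$ on a set $X$ with $a_1 \notin X$ is $B$ if $a_2 \in X$ and $0$ otherwise, so $a_2$ is a supermodular dependent of $a_1$ (and nothing else is). By symmetry the same holds with the roles of $a_1$ and $a_2$ swapped. The element $b$ never changes the value of $f$, so it has no supermodular dependents and appears as a dependent of no one. Thus every vertex in the supermodular dependency graph has degree at most $1$, giving supermodular degree $1$.

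For (ii), I would plug in the given sets. With $S=\{a_1,b\}$ and $T=\{a_2,b\}$ we have $|S|=|T|=2$, $S\cap T=\{b\}$ and $S\cup T=\{a_1,a_2,b\}$. Since neither $S$ nor $T$ contains both $a_1,a_2$, $f(S)=f(T)=0$, and $f(\{b\})=0$, while $f(S\cup T)=B$. So the left-hand side of the weakly submodular inequality is $2\cdot 0+2\cdot 0=0$, whereas the right-hand side is $1\cdot B+3\cdot 0=B>0$, violating the definition.

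The main (and only) conceptual point is (i): making sure the verification of supermodular degree $1$ is carried out against the exact definition in \cite{FeigeI13}, since the rest is a direct substitution. Combined with the earlier Proposition~\ref{prop:weak-not-super} (which gave a weakly submodular function of unbounded supermodular degree), this proposition establishes that the two classes are genuinely incomparable.
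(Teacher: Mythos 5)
Your proposal is correct and follows essentially the same route as the paper: the paper's argument is exactly the substitution in part (ii) (with the harmless typo $b_1$ for $b$, which you resolved correctly), and it leaves the supermodular-degree-1 claim as an unspoken assertion. Your part (i), checking that the dependency graph consists of the single edge $\{a_1,a_2\}$ while $b$ is isolated, is a worthwhile explicit verification of that assertion but does not change the substance of the argument.
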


Another generalization of submodular functions was introduced in 
Conitzer, Sandholm and Santi \cite{ConitzerSS05} and further developed
in the expessive MPH-$k$ hierarchy of Feige et al \cite{FeigeFIILS14}. 
They consider the representation of a set function $f(S)$ by its 
unique hypergraph $h(S)$  (called hypercube in \cite{ConitzerSS05}) 
representation. Functions in which the only non zero elements $h(S)$
in the hypergraph representation are positive and further satisfy $|S| \leq k$ 
are called PH-$k$ functions. A monotone function is in the 
class MPH-$k$ if it can be expressed as maximum over a finite  collection 
of PH-$k$ functions. Feige et al establish a number of significant results
amongst which (most relevant to our results)  are the facts that all 
monotone functions of supermodular degree 
$k-1$ are in MPH-$k$ for $k \geq 1$ and that using demand oracles 
and given the hypergraph 
representation of agent set functions, the welfare maximization problem 
for agents  with MPH-$k$ valuations can be solved by an LP-rounding 
algorithm with approximation ratio $k+1$. As a special case, we note that 
the sum dispersion problem is a MPH-$2$ function (even for non metric
distances). As they show (in their appendix L), 
the expressiveness of the MPH-$k$ framework  
may require some simple functions (even in MPH-$1$) to require exponentially  
many hypergraphs to be so represented. While functions in any MPH-$k$ are 
closed under linear combinations, maximizing such functions to a cardinality
constraint (and hence to matroid constraints) would require a breakthrough
for the densest subgraph problem  since the densest subgraph problem 
subject to a cardinality constraint can be reduced to the MPH-$2$ non metric dispersion problem (see 
Feige, Kortsarz and Peleg \cite{FeigePK01}, Andersen and Chellapilla 
\cite{AndersenC09} and  
Khuller and Saha \cite{KhullerS09}). 

\section{Weakly Submodular Function Maximization Subject to a Cardinality 
Constraint}
\label{sec:cardinality}

We emphasize again that we restrict attention to 
monotone, non-negative and normalized functions. 
In this section, we discuss a greedy approximation algorithm for maximizing 
weakly submodular functions subject to a uniform  matroid (i.e cardinality 
constraint). In section \ref{ssec:wsub-fd} we consider an arbitrary 
matroid constraint. 

Given an underlying set $U$ and a weakly submodular function $f(\cdot)$ defined on
every subset of $U$, the goal is to select a subset $S$ maximizing $f(S)$ subject to a cardinality constraint $|S|\le p$.
%We define the marginal gain of an element $u$ to a set $S$ to be $f_u(S)=f(S\cup\{u\})-f(S)$, and we have 
We consider the following standard greedy algorithm that achieves 
approximation ratio $\frac{e}{e-1}$ for monotone submodular
maximization by a classic result of Nemhauser, Fisher and 
Wolsey \cite{NWF78}. Furthermore, they showed that this is the best approximationpossible in the value oracle model and Feige \cite{Feige98} showe the 
same inapproximation holds for an explictly defined function subject to
the conjecture that $RP \neq NP$. By a result of Birnbaum and Goldman 
\cite{BirnbaumG09}, it is known that the same greedy algorithm is a 
2-approximation for the metric dispersion problem subject to a 
cardinality constraint  
as well as a 2-approximation
for the cardinality constrained diversification problem in Borodin et al \cite{BJLY14, BLY12}. 
\medskip

\noindent{\sc Greedy Algorithm for Weakly Submodular Function Maximization}

\begin{algorithmic}[1] 
\STATE $S=\emptyset$
\WHILE{$|S|<p$}
\STATE Find $u\in U\setminus S$ maximizing $f(S\cup\{u\})-f(S)$  
\STATE $S=S\cup\{u\}$
\ENDWHILE
\STATE return $S$
\end{algorithmic}

\begin{theorem}
\label{thm:wsgreedy}
The standard greedy algorithm achieves approximation ratio $\approx 5.95$.
\end{theorem}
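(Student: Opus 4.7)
The plan is an inductive analysis along the greedy trajectory, deriving a step-wise recurrence and then unrolling it. With $G_i$ denoting the greedy set after $i$ iterations, $O$ an optimum of size $p$, $f_i := f(G_i)$, and $c_i := |G_i \cap O|$, the target is an inequality of the form $f_i \ge A_i f_{i-1} + B_i f(O)$ whose unrolling from $f_0 = 0$ gives $f_p \ge f(O)/\alpha(p)$.

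To produce such a recurrence at step $i$, I would combine two ingredients: the greedy rule $f_i \ge f(G_{i-1} \cup \{u\})$ for every $u \notin G_{i-1}$, and weak submodularity applied to the pair $S = G_{i-1} \cup \{o\}$, $T = O$, for $o \in O \setminus G_{i-1}$. Since $|S| = i$, $|T| = p$, $|S \cap T| = c_{i-1}+1$, and $|S \cup T| = i-1+p-c_{i-1}$, this yields
\[
p \, f(G_{i-1} \cup \{o\}) + i \, f(O) \;\ge\; (c_{i-1}+1)\, f(G_{i-1} \cup O) + (i-1+p-c_{i-1})\, f((G_{i-1} \cap O) \cup \{o\}).
\]
Using monotonicity $f(G_{i-1} \cup O) \ge f(O)$, the greedy bound $p\,f(G_{i-1} \cup \{o\}) \le p\,f_i$, and either averaging over $o \in O \setminus G_{i-1}$ or picking the sharpest single $o$, should give the desired recurrence. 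After unrolling, $\alpha(p)$ is computed as the extremum of a small dynamic program over admissible overlap trajectories $c_0 \le c_1 \le \cdots \le c_p$ with $c_i - c_{i-1} \in \{0,1\}$ and $0 \le c_i \le \min(i,p)$; evaluating this numerically for increasing $p$ is what should exhibit the monotone convergence (from below) to the stated constant $\approx 5.95$.

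The main obstacle is extracting a non-trivial recurrence from the weak submodular inequality at all. The size-weighted form of the inequality is unforgiving: when one of $S, T$ contains the other the inequality collapses to a triviality, and when $|S \cap T|$ is too small the right-hand side becomes so small that no positive lower bound on $f_i$ survives. With the choice $(G_{i-1} \cup \{o\}, O)$ above, the naive monotonicity estimate $f(G_{i-1} \cup O) \ge f(O)$ already produces an inequality with a negative coefficient on $f(O)$ whenever $c_{i-1} < i-1$, which is generically the case, so a sharper estimate of either $f(G_{i-1} \cup O)$ or $f((G_{i-1} \cap O) \cup \{o\})$ must be folded in. My fallback is to couple this step-wise inequality with a terminal weak submodular comparison between $G_p$ and $O$, or to average several weak submodular inequalities across multiple elements of $O \setminus G_{i-1}$ simultaneously so that the negative terms cancel. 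A secondary obstacle is identifying the worst-case overlap trajectory $(c_i)$ that minimizes the unrolled bound; by an exchange argument I expect this to be the trajectory that keeps $c_i$ as small as possible for as long as possible, since that regime makes the $B_i$ coefficient smallest.
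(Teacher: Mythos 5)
There is a genuine gap, and you have essentially diagnosed it yourself: every weak submodular inequality in which $T = O$ (or any superset of $O$) places $f(O)$ on the \emph{left}-hand side with coefficient $|S| = i$, while the union term $f(G_{i-1}\cup O)\ge f(O)$ contributes only $|S\cap T| = c_{i-1}+1 \le i$ on the right, so the net coefficient on $f(O)$ is nonpositive no matter how you average over $o \in O\setminus G_{i-1}$ --- summing the same defective inequality over several $o$ does not change its sign structure, so the negative terms do not cancel. The missing idea in the paper's proof is to never let $O$ itself appear as one of the two sets. Writing $C_i = O\setminus S_i = \{c_1,\dots,c_{m_i}\}$, the paper applies weak submodularity to the pairs
\begin{equation*}
S = S_i\cup\{c_j\}, \qquad T = S_i\cup\{c_{j+1},\dots,c_{m_i}\}, \qquad j=1,\dots,m_i,
\end{equation*}
whose intersection is exactly $S_i$ and whose union is $S_i\cup\{c_j,\dots,c_{m_i}\}$. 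Multiplying the $j$-th inequality by $(\tfrac{i+1}{i})^{j-1}$ makes the term $f(S_i\cup\{c_{j+1},\dots,c_{m_i}\})$ on the left of the $j$-th inequality cancel the union term on the right of the $(j+1)$-st, so the chain telescopes; only $f(S_i\cup C_i)\ge f(O)$ survives with a \emph{positive} coefficient $i$, and every remaining left-hand term $f(S_i\cup\{c_j\})$ is absorbed by the greedy rule $f(S_{i+1})\ge f(S_i\cup\{c_j\})$. This yields the clean recurrence $a_i f(S_{i+1}) - b_i f(S_i)\ge i\,f(O)$ with explicit $a_i,b_i$ depending only on $i$ and $m_i$.

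Your proposed bookkeeping is also heavier than what is needed: the paper never runs a dynamic program over overlap trajectories $(c_i)$. Instead it observes that $a_i - b_i = i$ identically in $m_i$, so replacing $m_i$ by its maximum possible value $p$ changes $a_i$ and $b_i$ by the same nonnegative amount, and monotonicity ($f(S_{i+1})\ge f(S_i)$) shows this replacement only weakens the recurrence. The resulting bound $\alpha(p) = \bigl(\sum_{i=1}^{p-1}\bigl[\tfrac{i}{a^*_i}\prod_{j=i+1}^{p-1}\tfrac{b^*_j}{a^*_j}\bigr]\bigr)^{-1}$ depends on $p$ alone and is then evaluated numerically. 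Without the nested-suffix pairing and the geometric telescoping weights, neither of your fallback routes produces a step inequality with a nonnegative coefficient on $f(O)$, so the unrolling you envision cannot get started.
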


Before getting into the proof, we first prove two algebraic identities.

\begin{lemma}
\label{lem:id1}
$$\sum_{j=1}^n(\frac{i+1}{i})^{j-1}=i(\frac{i+1}{i})^n-i.$$
\end{lemma}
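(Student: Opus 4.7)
The left-hand side is a finite geometric series with common ratio $r = \frac{i+1}{i}$, so the plan is to invoke the standard closed form $\sum_{j=1}^{n} r^{j-1} = \frac{r^n - 1}{r-1}$. The only algebraic observation worth recording is that $r - 1 = \frac{1}{i}$, which turns division by $r-1$ into multiplication by $i$. Substituting then yields
\[
\sum_{j=1}^{n}\left(\frac{i+1}{i}\right)^{j-1} \;=\; i\left[\left(\frac{i+1}{i}\right)^{n} - 1\right] \;=\; i\left(\frac{i+1}{i}\right)^{n} - i,
\]
which is exactly the stated identity.

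If one preferred a self-contained derivation, an equally short induction on $n$ works. The base case $n=1$ reduces to $1 = (i+1) - i$. For the inductive step, adding $\left(\frac{i+1}{i}\right)^{n}$ to both sides of the hypothesis produces $(i+1)\left(\frac{i+1}{i}\right)^{n} - i$, and this simplifies to $i\left(\frac{i+1}{i}\right)^{n+1} - i$ via the single rewrite $i+1 = i \cdot \frac{i+1}{i}$. There is no real obstacle: the statement is a routine geometric-series evaluation dressed up with the specific ratio $(i+1)/i$, and the only decision is stylistic. I would present the geometric-series version in the paper, since it makes transparent where the prefactor $i$ in the final answer comes from (namely, from $1/(r-1)$).
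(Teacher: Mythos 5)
Your proof is correct and takes exactly the same route as the paper's: both evaluate the left-hand side as a geometric series with ratio $\frac{i+1}{i}$ and use $\frac{i+1}{i}-1=\frac{1}{i}$ to obtain the prefactor $i$. The additional induction you sketch is fine but unnecessary.
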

\begin{proof}
Note that the expression on the left-hand side is a geometric sum. Therefore, we have
$$\sum_{j=1}^n(\frac{i+1}{i})^{j-1}=\frac{(\frac{i+1}{i})^n-1}{\frac{i+1}{i}-1}=i(\frac{i+1}{i})^n-i.$$
\qed
\end{proof}

\begin{lemma}
\label{lem:id2}
$$\sum_{j=1}^nj(\frac{i+1}{i})^{j-1}=ni^2(\frac{i+1}{i})^{n+1}-(n+1)i^2(\frac{i+1}{i})^n+i^2.$$
\end{lemma}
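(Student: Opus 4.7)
The plan is to prove this identity by the standard multiply-and-subtract manipulation used for Lemma~\ref{lem:id1}. Set $r = (i+1)/i$, so that $r-1 = 1/i$, and let $S_n$ denote the left-hand side $\sum_{j=1}^{n} j r^{j-1}$.

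First, I compute $S_n - r S_n$ by reindexing the shifted sum. Aligning the two series term by term, almost all contributions cancel, leaving a geometric tail together with a single boundary term:
$$(1-r)S_n \;=\; \sum_{j=1}^{n} r^{j-1} \;-\; n r^n.$$
Next, I invoke Lemma~\ref{lem:id1} to replace the geometric sum by $i r^n - i$, then multiply through by $-i = 1/(1-r)$ to obtain the closed form $S_n = i(n-i) r^n + i^2$.

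Finally, I verify that this closed form agrees with the claimed right-hand side. The reconciliation is a direct algebraic manipulation: one factors $i^2 r^n$ out of the first two terms of the target expression and uses $r = (i+1)/i$ to reduce $n r - (n+1)$ to $(n-i)/i$, at which point the target collapses to $i(n-i) r^n + i^2 = S_n$.

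The main obstacle is essentially bookkeeping: the reindexing in the first step requires care with the boundary term $n r^n$, and one must track the $i$ versus $i^2$ factors produced by repeated use of $r-1 = 1/i$. There is no conceptual difficulty. As an equally clean alternative, induction on $n$ would work: the inductive step reduces, after collecting terms, to the identity $n i^2 r^{n-1} (r-1)^2 = n r^{n-1}$, which is immediate from $(r-1)^2 = 1/i^2$, and the base case $n=1$ collapses to $i^2(r-1)^2 = 1$.
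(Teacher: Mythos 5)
Your proof is correct, but it takes a genuinely different route from the paper's. The paper obtains the identity by calculus: it writes $f(x)=\sum_{j=1}^n x^j=\frac{x^{n+1}-1}{x-1}$, differentiates both sides to get a closed form for $\sum_{j=1}^n j x^{j-1}$, and then substitutes $x=\frac{i+1}{i}$, using $(x-1)^2=1/i^2$ to produce the $i^2$ factors. You instead use the purely algebraic multiply-and-subtract (perturbation) trick: computing $(1-r)S_n=\sum_{j=1}^n r^{j-1}-nr^n$, invoking Lemma~\ref{lem:id1} for the geometric tail, and dividing by $1-r=-1/i$ to get $S_n=i(n-i)r^n+i^2$, which you correctly reconcile with the stated right-hand side via $nr-(n+1)=(n-i)/i$. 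I checked both the telescoping step and the final reconciliation; they are sound, as is your sketched induction alternative. What each approach buys: the paper's differentiation argument is shorter to write down once one accepts term-by-term differentiation of a finite sum, and it yields the formula directly in the form stated in the lemma; yours is more elementary (no derivatives), reuses Lemma~\ref{lem:id1} explicitly, and keeps every step at the level of finite sums, at the cost of a short extra computation to match the closed form to the paper's preferred presentation.
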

\begin{proof}
Consider the function $f(x)=\sum_{j=1}^nx^j$ with $x\ne 1$, its derivative $f'(x)=\sum_{j=1}^njx^{j-1}$.
Since $f(x)$ is a geometric sum and $x\ne 1$, we have 
$$f(x)=\frac{x^{n+1}-1}{x-1}.$$ 
Taking derivatives on both sides we have
$$f'(x)=\frac{(n+1)x^n(x-1)-x^{n+1}+1}{(x-1)^2}=\frac{nx^{n+1}-(n+1)x^n+1}{(x-1)^2}.$$ 
Therefore, we have
$$\sum_{j=1}^njx^{j-1}=\frac{nx^{n+1}-(n+1)x^n+1}{(x-1)^2}.$$ 
Substituting $x$ with $\frac{i+1}{i}$, we have
$$\sum_{j=1}^nj(\frac{i+1}{i})^{j-1}=\frac{n(\frac{i+1}{i})^{n+1}-(n+1)(\frac{i+1}{i})^n+1}{(\frac{i+1}{i}-1)^2}=ni^2(\frac{i+1}{i})^{n+1}-(n+1)i^2(\frac{i+1}{i})^n+i^2.$$ 
\qed
\end{proof}

Now we proceed to the proof to Theorem~\ref{thm:wsgreedy}.
\begin{proof}
Let $S_i$ be the greedy solution after the $i^{\rm th}$ iteration; i.e., $|S_i|=i$. Let $O$ be an optimal solution,
and let $C_i=O\setminus S_i$. Let $m_i=|C_i|$, and $C_i=\{c_1, c_2, \dots, c_{m_i}\}$. By the weakly submodularity definition, we get the following $m_i$ inequalities for each $0<i<p$:
\begin{align*}
(i+m_i-1)f(S_i\cup\{c_1\})+(i+1)f(S_i\cup\{c_2,\dots,c_{m_i}\}) &\ge (i)f(S_i\cup\{c_1\dots,c_{m_i}\})
  +(i+m_i)f(S_i)\\
(i+m_i-2)f(S_i\cup\{c_2\})+(i+1)f(S_i\cup\{c_3,\dots,c_{m_i}\}) &\ge (i)f(S_i\cup\{c_2\dots,c_{m_i}\})
  +(i+m_i-1)f(S_i)\\
  &\vdots\\
(i+1)f(S_i\cup\{c_{m_i-1}\})+(i+1)f(S_i\cup\{c_{m_i}\})&\ge (i)f(S_i\cup\{c_{m_i-1},c_{m_i}\})
+(i+2)f(S_i)\\
(i)f(S_i\cup\{c_{m_i}\})+(i+1)f(S_i)&\ge (i)f(S_i\cup\{c_{m_i}\})+(i+1)f(S_i).
\end{align*}

Multiplying the $j^{\rm th}$ inequality by $(\frac{i+1}{i})^{j-1}$, and summing all of them up (noting that the second term of the left hand 
side of the $j^{th}$ inequality then cancels the first term of the $j+1^{st}$ 
inequality), we have
\begin{align*}
\sum_{j=1}^{m_i}(i+m_i-j)(\frac{i+1}{i})^{j-1}f(S_i\cup\{c_j\})&+(i+1)(\frac{i+1}{i})^{m_i-1}f(S_i)\\
&\ge(i)f(S_i\cup\{c_1,\dots,c_{m_i}\})+\sum_{j=1}^{m_i}(i+m_i-j+1)(\frac{i+1}{i})^{j-1}f(S_i).
\end{align*}

By monotonicity, we have $f(S_i\cup\{c_1,\dots,c_{m_i}\})\ge f(O)$. Rearranging the inequality,
$$\sum_{j=1}^{m_i}(i+m_i-j)(\frac{i+1}{i})^{j-1}f(S_i\cup\{c_j\})\ge(i)f(O)+\sum_{j=1}^{m_i-1}(i+m_i-j+1)(\frac{i+1}{i})^{j-1}f(S_i).$$
%Let $e_i$ be the element selected by the greedy algorithm at the $i^{\rm th}$ iteration. 
By the greedy selection rule, we know that 
$f(S_{i+1})\ge f(S_i\cup\{c_j\})$ for any $1\le j\le m_i$, therefore we have
$$\sum_{j=1}^{m_i}(i+m_i-j)(\frac{i+1}{i})^{j-1}f(S_{i+1})\ge(i)f(O)+\sum_{j=1}^{m_i-1}(i+m_i-j+1)(\frac{i+1}{i})^{j-1}f(S_i).$$
%Note that for all $i$, $1\le i\le p$, we have $\frac{i+1}{i}>1$. Therefore,
%\begin{align*}
%\sum_{j=m_i}=(i+1)(\frac{i+1}{i})^{m_i}+&(i+2)(\frac{i+1}{i})^{m_i+1}+\dots+(i+p-j)(\frac{i+1}{i})^{p-1}\\
%&\ge(i+1)(\frac{i+1}{i})^{m_i-1}+(i+2)(\frac{i+1}{i})^{m_i}+\dots+(i+p-j)(\frac{i+1}{i})^{p-2}.
%\end{align*}
%Furthermore, since $f(\cdot)$ is monotone and normalized, we have $f(S_{i+1})\ge f(S_i)\ge 0$.
%\newpage
For the ease of notation, we let
\begin{align*}
a_i=\sum_{j=1}^{m_i}(i+m_i-j)(\frac{i+1}{i})^{j-1}&& b_i=\sum_{j=1}^{m_i-1}(i+m_i-j+1)(\frac{i+1}{i})^{j-1}
\end{align*}
so that we have $a_i f(S_{i+1}) - b_i f(S_i) \geq (i) f(O)$ \\
 
We first simplify $a_i$ and $b_i$. %based on Lemma~\ref{lem:id1} and \ref{lem:id2}.
\begin{align*}
a_i&=\sum_{j=1}^{m_i}(i+m_i-j)(\frac{i+1}{i})^{j-1}\\
&=\sum_{j=1}^{m_i}(i+m_i)(\frac{i+1}{i})^{j-1}-\sum_{j=1}^{m_i}j(\frac{i+1}{i})^{j-1}.
\end{align*}
By Lemma~\ref{lem:id1} and \ref{lem:id2}, we have
\begin{align*}
a_i&=(i+m_i)[i(\frac{i+1}{i})^{m_i}-i]-m_ii^2(\frac{i+1}{i})^{m_i+1}+(m_i+1)i^2(\frac{i+1}{i})^{m_i}-i^2\\
&=[i^2+im_i-m_i(i^2+i)+(m_i+1)i^2](\frac{i+1}{i})^{m_i}-2i^2-im_i\\
&=2i^2(\frac{i+1}{i})^{m_i}-2i^2-im_i.
\end{align*}
Similarly, we have
\begin{align*}
b_i&=\sum_{j=1}^{m_i-1}(i+m_i-j+1)(\frac{i+1}{i})^{j-1}\\
&=\sum_{j=1}^{m_i-1}(i+m_i+1)(\frac{i+1}{i})^{j-1}-\sum_{j=1}^{m_i-1}j(\frac{i+1}{i})^{j-1}\\
&=(i+m_i+1)[i(\frac{i+1}{i})^{m_i-1}-i]-(m_i-1)i^2(\frac{i+1}{i})^{m_i}+m_ii^2(\frac{i+1}{i})^{m_i-1}-i^2\\
&=[i^2+im_i+i-(m_i-1)(i^2+i)+m_ii^2](\frac{i+1}{i})^{m_i-1}-2i^2-im_i-i\\
&=2i(i+1)(\frac{i+1}{i})^{m_i-1}-2i^2-im_i-i\\
&=2i^2(\frac{i+1}{i})^{m_i}-2i^2-im_i-i.
\end{align*}
Now let 
\begin{align*}
a^*_i=\sum_{j=1}^{p}(i+p-j)(\frac{i+1}{i})^{j-1}&& b^*_i=\sum_{j=1}^{p-1}(i+p-j+1)(\frac{i+1}{i})^{j-1}
\end{align*}
The simplication of $a_i$ and $b_i$ makes it clear that 
$a_i-b_i = i$ for any value of $m_i$. Since $a^*_i$ (resp. $b^*_i$) can
be thought of as $a_i$ (resp. $b_i$) with $m_i = p$, we have
$$a^*_i-a_i=b^*_i-b_i\ge 0$$
Therefore,
$$a^*_if(S_{i+1})-b^*_if(S_i)=a_if(S_{i+1})-b_if(S_i)+(a^*_i-a_i)[f(S_{i+1})-f(S_i)].$$
Since $f(\cdot)$ is monotone, we have $f(S_{i+1})-f(S_i)\ge 0$. Therefore,
$$a^*_if(S_{i+1})-b^*_if(S_i)\ge a_if(S_{i+1})-b_if(S_i)\ge if(O).$$
Then we have the following set of inequalities:
\begin{align*}
a^*_1f(S_2)&\ge 1f(O)+b^*_1f(S_1)\\
a^*_2f(S_3)&\ge 2f(O)+b^*_2f(S_2)\\
&\vdots\\
a^*_{p-2}f(S_{p-1})&\ge (p-2)f(O)+b^*_{p-2}f(S_{p-2})\\
a^*_{p-1}f(S_p)&\ge (p-1)f(O)+b^*_{p-1}f(S_{p-1}).
\end{align*}

Multiplying the $i^{\rm th}$ inequality by $\frac{\prod_{j=1}^{i-1}a^*_j}{\prod_{j=2}^{i}b^*_j}$, summing all of them up and ignoring the term $b^*_1f(S_1)$,
$$\frac{\prod_{j=1}^{p-1}a^*_j}{\prod_{j=2}^{p-1}b^*_j}f(S_p)\ge\sum_{i=1}^{p-1}\frac{i\prod_{j=1}^{i-1}a^*_j}{\prod_{j=2}^{i}b^*_j}f(O).$$
Therefore the approximation ratio
$$\frac{ f(O)}{ f(S_p)}\le \frac{\frac{\prod_{j=1}^{p-1}a^*_j}{\prod_{j=2}^{p-1}b^*_j}}{\sum_{i=1}^{p-1}\frac{i\prod_{j=1}^{i-1}a^*_j}{\prod_{j=2}^{i}b^*_j}}
=\left(\sum_{i=1}^{p-1}\frac{i\prod_{j=i+1}^{p-1}b^*_j}{\prod_{j=i}^{p-1}a^*_j}\right)^{-1}
=\left(\sum_{i=1}^{p-1}\left[\frac{i}{a^*_i}\cdot\prod_{j=i+1}^{p-1}\frac{b^*_j}{a^*_j}\right]\right)^{-1}.$$
Note that the approximation ratio is simply a function of $p$.
% and it converges\footnote{This number is obtained by a computer program.} to 5.95 as $p$ tends to $\infty$. 
In particular, the approximation ratio is 3.74 when $p=10$ and approximation ratio is 5.62 when $p=100$. Computer evaluations suggest that the approximation 
ratio 
%is simply a function of $p$.
% and it 
converges to 5.95 as $p$ tends to $\infty$.  
%A problem with the expression on the right-hand side is that it contains variables $m_1,\dots, m_{p-1}$, which are of unknown values. However, we know that they are monotonically non-increasing and $m_i\le m_{i+1}+1$ for all $i$. Furthermore, the maximum value of $m_i$ for any $i$ is $p$ and they are all obtainable when $S_p$ and $O$ are disjoint.
%Now observe the quantity $\frac{b_j}{a_j}$ decreases as $m_j$ increases and the term $\frac{i}{a_i}$ also decreases as $m_j$ increases. Therefore, the quantity on the right-hand side is bounded from above when $m_1=m_2=\dots=m_{p-1}=p$.
%Once all $m_i$ are fixed, the right-hand side is simply a function of $p$, and it converges to 5.95 as $p$ tends to $\infty$.
%(The mysterious 5.95 is the value I obtained using a computer program when $p=1000$. I think it must be some function relate to $e$.)
\qed
\end{proof}

In terms of hardness of approximation, assuming $P \neq NP$, 
Feige \cite{Feige98} proved that 
the max coverage problem 
(an example of monotone submodular maximization subject to a 
cardinality constraint) is
known to be hard to approximate to a factor better than 
$\frac{e}{e-1} - \epsilon$.  The problem of maximizing the sum of metric 
distances subject to a cardinality constraint has been called the
{\it max-sum dispersion problem}.  
%Assuming the hardness of the hidden clique problem,  
%Alon \cite{Alon14} notes that it follows from Alon et al 
%\cite{Alonammw11}) the 
%The max-sum $p$ dispersion 
%problem is known to be hard to approximate to a factor better 
%than $2-\epsilon$. 
The max-sum dispersion problem is known to be NP-hard
%~\cite{HM88} 
by an easy reduction from Max-Clique, and as noted by 
Alon \cite{Alon14}, there is evidence that the problem is hard to 
compute in polynomial time with approximation
$2-\epsilon$ for any $\epsilon > 0$
when $p = n^r$ for $1/3 \leq r < 1$.  (See the discussion in Section 3 of 
\cite{BJLY14}.)
%but it is not known whether or not it admits a PTAS.
%Namely, based on the assumption that the planted clique problem is hard, 
%%might say "on average" here
%Alon et al \cite{Alonammw11}
%show that it is hard to distinguish between a graph having a large 
%planted clique of
%size $p$ 
%%(for any $k = n^r$ for $1/3 \leq k < 1$) 
%and one in which the densest subgraph of size $p$ is of 
%density at most an arbitrarily small constant 
%$\delta$ (for suffiently large $n$). 
%%This holds for any $k = n^r$ for $1/3 \leq k < 1$.
%Considering the complement of a random graph $G$ in ${\cal G}(n,1/2)$,
%their result says that it
%is hard to distinguish between a graph having an independent set of size $p$ and
%one in which the density of edges in any size $p$-subgraph is at least 
%$(1-\delta)$.
%Adding another node to the complement graph that is connected
%to all nodes in $G$,  
%%(this clearly doesn't change the size of any independent set).
%the graph distance metric is now the $\{1,2\}$ metric formed by the 
%transitive closure so that
%adjacent nodes have distance 1 and non adjacent nodes have distance 2.
%So we therefore cannot distinguish between graphs where there exists a set of 
%nodes $S$ of size $p$ ( for
%$p$ as above) where $\sum_{(u,v) \in S} d(u,v) = {p \choose 2} * 2$ and one
%where in every set of size $p$, we have
%$\sum_{(u,v) \in S} d(u,v) \leq {p \choose 2} [(1-\delta) + 2 \delta]$. 
% 

%\end{proof}

\section{Weakly Submodular Function Maximization Subject to an Arbitrary Matroid
Constraint}
\label{ssec:wsub-fd}
It is natural to consider a general matroid constraint for the problem of weakly submodular function maximization. For this more general problem, the greedy algorithm in the previous section no longer achieves any constant approximation ratio. 
See the example presented in the Appendix of \cite{BJLY14}.
 Following the result for
max-sum diversification subject to a matroid constraint 
in \cite{BLY12}, we will analyze the following oblivious local search algorithm:
%\todo{Need to say why greedy fails to give constant approximation. 
%I think we had example
%for max sum diversification; not sure if we had example for dispersion.}
\medskip

\noindent{\sc Weakly Submodular Function Maximization with a Matroid Constraint}

\begin{algorithmic}[1] 
\STATE Let $S$ be a basis of $\cal M$
\WHILE{exists $u\in U\setminus S$ and $v\in S$ such that $S\cup\{u\}\setminus\{v\}\in {\cal F}$ and $f(S\cup\{u\}\setminus\{v\}) >f(S)$}
\STATE $S=S\cup\{u\}\setminus\{v\}$
\ENDWHILE
\STATE return $S$
\end{algorithmic}

The following lemma on the exchange property of matroid bases was first stated in \cite{Bru69}.
\begin{lemma}[Brualdi \cite{Bru69}]
For any two sets $X,Y \in {\cal F}$ with $|X| =|Y|$, there is a bijective mapping
$g : X \rightarrow Y$ such that $X \cup \{g(x)\} \setminus \{x\} \in {\cal F}$ for any $x \in X$.
\label{lem:bj}
\end{lemma}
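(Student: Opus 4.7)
The plan is to reformulate the exchange property as a bipartite matching problem and verify Hall's condition using fundamental circuits.

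First I would reduce to the case in which $X$ and $Y$ are bases of $\mathcal{M}$. If $|X|=|Y|=k$ is less than the rank of $\mathcal{M}$, one passes to the truncation $\mathcal{M}_k$, whose independent sets are the members of $\mathcal{F}$ of size at most $k$; this is a matroid, $X$ and $Y$ are bases of it, and an exchange $(X\setminus\{x\})\cup\{y\}$ belongs to the independent sets of $\mathcal{M}_k$ iff it belongs to $\mathcal{F}$. For every $x\in X\cap Y$ set $g(x)=x$, which trivially satisfies the exchange condition since $(X\cup\{x\})\setminus\{x\}=X\in\mathcal{F}$. What remains is to construct a bijection between $X':=X\setminus Y$ and $Y':=Y\setminus X$ (of equal size) that respects the exchange property.

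Second, I would build the bipartite ``exchange graph'' $H$ on vertex classes $X'$ and $Y'$, placing an edge $xy$ whenever $(X\setminus\{x\})\cup\{y\}\in\mathcal{F}$. The required bijection on $X'$ is exactly a perfect matching in $H$, so by Hall's theorem it suffices to verify $|N(A)|\ge|A|$ for every $A\subseteq X'$. A useful reformulation: for each $y\in Y'$, let $C(y)$ denote the fundamental circuit of $y$ with respect to $X$, i.e.\ the unique circuit of $\mathcal{M}$ contained in $X\cup\{y\}$. Then $xy$ is an edge of $H$ exactly when $x\in C(y)\cap X'$, and since $Y$ is independent, $C(y)$ cannot lie entirely inside $X\cap Y$, so $C(y)\cap X'\ne\emptyset$.

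Third, I would verify Hall's condition by contradiction. Suppose some $A\subseteq X'$ satisfies $|N(A)|<|A|$, and set $B=N(A)\subseteq Y'$. For every $y\in Y'\setminus B$ the circuit $C(y)$ is disjoint from $A$ (otherwise $y$ would be adjacent to an element of $A$), and $C(y)\setminus\{y\}\subseteq X\setminus A$ because any element of $C(y)$ lying in $X\cap Y$ is automatically outside $A\subseteq X'$. Hence $(X\setminus A)\cup\{y\}$ contains $C(y)$ and has the same rank as $X\setminus A$; iterating over $Y'\setminus B$ gives $r\bigl((X\setminus A)\cup(Y'\setminus B)\bigr)=|X\setminus A|$. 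Adjoining $B$ raises the rank by at most $|B|$, so since $Y\subseteq(X\setminus A)\cup(Y'\setminus B)\cup B$,
$$|X|=r(Y)\le r\bigl((X\setminus A)\cup(Y'\setminus B)\cup B\bigr)\le|X\setminus A|+|B|<|X|,$$
the contradiction we sought.

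The main obstacle, I expect, is the Hall-condition verification above: one must track the fundamental-circuit argument carefully and exploit that $X\cap Y\subseteq X\setminus A$ so that adjoining any $y\in Y'\setminus B$ creates a dependency wholly inside $X\setminus A$ rather than crossing into $A$. Once Hall's condition is established, the perfect matching in $H$ supplies the bijection on $X'$, and combined with the identity on $X\cap Y$ it produces the desired $g:X\to Y$.
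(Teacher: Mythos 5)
The paper does not actually prove this lemma: it is imported from Brualdi \cite{Bru69} and used as a black box, so there is no in-paper argument to compare against. Your proof is correct, and it is essentially the standard modern proof of the basis-exchange bijection: reduce to bases via truncation, fix $X\cap Y$ pointwise, and obtain a perfect matching in the exchange graph between $X\setminus Y$ and $Y\setminus X$ by checking Hall's condition with fundamental circuits. The decisive rank computation is sound: if some $A\subseteq X\setminus Y$ had $|N(A)|<|A|$, then every $y\in Y'\setminus N(A)$ lies in the span of $X\setminus A$ because its fundamental circuit $C(y)$ avoids $A$ and $C(y)\setminus\{y\}\subseteq X\setminus A$, whence
$$|X|=r(Y)\le r\bigl((X\setminus A)\cup(Y'\setminus N(A))\bigr)+|N(A)|=|X\setminus A|+|N(A)|<|X|,$$
a contradiction. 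Two cosmetic slips, neither of which affects correctness: for $x\in X\cap Y$ the exchanged set is $X\setminus\{x\}$ rather than $X$ under the left-to-right parsing of $X\cup\{g(x)\}\setminus\{x\}$, but it is independent by heredity either way; and the statement that ``$C(y)$ cannot lie entirely inside $X\cap Y$'' should read that $C(y)$ cannot be contained in $(X\cap Y)\cup\{y\}$, a subset of the independent set $Y$, which is what actually gives $C(y)\cap(X\setminus Y)\ne\emptyset$ and hence that every $y\in Y\setminus X$ has a neighbour.
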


Before we prove the theorem, we need to prove several lemmas.
Let $O$ be the optimal solution, and $S$, the solution at the end of the local search algorithm.
Let $s$ be the size of a basis; let
$A=O\cap S$, $B=S\setminus A$ and $C=O\setminus A$.
By Lemma~\ref{lem:bj}, there is a bijective mapping $g: B\rightarrow C$ such that $S\cup\{b\}\setminus\{g(b)\}\in {\cal F}$ for any $b\in B$.
Let $B=\{b_1, b_2, \dots, b_t\}$, and let $c_i=g(b_i)$ for all $i=1,\dots, t$.
We reorder $b_1, b_2, \dots, b_t$ in different ways.
Let $b'_1, b'_2, \dots, b'_t$ be an ordering such that the corresponding $c'_1, c'_2, \dots, c'_t$ 
maximizes the sum $\sum_{i=1}^t(s-i)(\frac{s+1}{s})^{i-1} f(S\cup\{c'_i\})$;
and let $b''_1, b''_2, \dots, b''_t$ be an ordering such that the corresponding $c''_1, c''_2, \dots, c''_t$ 
minimizes the sum \[\sum_{i=1}^t(s+t-i)(\frac{s+1}{s})^{i-1} f(S\cup\{c''_i\}).\]

\begin{lemma} 
\label{lem:ag}
Given three non-increasing non-negative sequences:
\begin{align*}
&\alpha_1\ge \alpha_2\ge \dots\ge \alpha_n\ge 0,\\
&\beta_1\ge \beta_2\ge \dots\ge \beta_n\ge 0,\\
&x_1\ge x_2\ge \dots\ge x_n\ge 0.
\end{align*}
Then we have
$$\sum_{i=1}^n \alpha_ix_i\sum_{i=1}^n \beta_i\ge\sum_{i=1}^n \beta_ix_{n+1-i}\sum_{i=1}^n \alpha_i.$$
%$$\frac{\sum_{i=1}^n \alpha_ix_i}{\sum_{i=1}^n \beta_ix_{n+1-i}}\ge\frac{\sum_{i=1}^n \alpha_i}{\sum_{i=1}^n \beta_i},$$
%assuming none of the sequences sum up to zero.
\end{lemma}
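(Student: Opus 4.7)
The plan is to derive this inequality as an immediate corollary of Chebyshev's sum inequality, applied in two complementary ways and then chained together. Recall Chebyshev's inequality: for two non-increasing sequences $u_1 \geq \cdots \geq u_n$ and $v_1 \geq \cdots \geq v_n$, one has $n\sum_i u_i v_i \geq (\sum_i u_i)(\sum_i v_i)$, and if instead $(v_i)$ is non-decreasing (so the two sequences are oppositely ordered), the inequality flips to $n\sum_i u_i v_i \leq (\sum_i u_i)(\sum_i v_i)$.

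First I would apply Chebyshev to the similarly-ordered pair $(\alpha_i)$ and $(x_i)$, both non-increasing, obtaining
$$n \sum_{i=1}^n \alpha_i x_i \;\geq\; \Bigl(\sum_{i=1}^n \alpha_i\Bigr)\Bigl(\sum_{i=1}^n x_i\Bigr).$$
Next I would apply Chebyshev to the oppositely-ordered pair $(\beta_i)$ (non-increasing) and $(x_{n+1-i})$ (non-decreasing, since the $x_i$ are non-increasing), which yields
$$n \sum_{i=1}^n \beta_i x_{n+1-i} \;\leq\; \Bigl(\sum_{i=1}^n \beta_i\Bigr)\Bigl(\sum_{i=1}^n x_{n+1-i}\Bigr) \;=\; \Bigl(\sum_{i=1}^n \beta_i\Bigr)\Bigl(\sum_{i=1}^n x_i\Bigr),$$
where the final equality is simply a reindexing.

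Finally I would multiply the first inequality by the non-negative quantity $\sum_i \beta_i$ and the second by the non-negative quantity $\sum_i \alpha_i$. Since both right-hand sides then become $(\sum_i \alpha_i)(\sum_i \beta_i)(\sum_i x_i)$, we can chain:
$$n \sum_i \alpha_i x_i \sum_i \beta_i \;\geq\; \Bigl(\sum_i \alpha_i\Bigr)\Bigl(\sum_i \beta_i\Bigr)\Bigl(\sum_i x_i\Bigr) \;\geq\; n \sum_i \beta_i x_{n+1-i} \sum_i \alpha_i.$$
Dividing by $n$ gives the claim.

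There is no real obstacle here: the only things to check are that $(x_{n+1-i})$ is indeed non-decreasing (immediate from the hypothesis) and that non-negativity of all three sequences justifies multiplying inequalities by the partial sums without reversing them. If one prefers a self-contained derivation rather than citing Chebyshev, the same conclusion can be obtained by expanding $\sum_{i,j}(\alpha_i - \alpha_j)(x_i - x_j) \geq 0$ and $\sum_{i,j}(\beta_i - \beta_j)(x_{n+1-i} - x_{n+1-j}) \leq 0$, but the two-line Chebyshev argument above is cleaner.
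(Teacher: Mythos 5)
Your proof is correct and takes essentially the same route as the paper's: both reduce the claim to the two Chebyshev-type inequalities $n\sum_i \alpha_i x_i \ge (\sum_i \alpha_i)(\sum_i x_i)$ and $n\sum_i \beta_i x_{n+1-i} \le (\sum_i \beta_i)(\sum_i x_i)$ and chain them through the common middle term $(\sum_i\alpha_i)(\sum_i\beta_i)(\sum_i x_i)$. The only difference is presentational: you cite Chebyshev's sum inequality, while the paper proves those two inequalities from scratch via a telescoping rearrangement argument.
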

\begin{proof}
Consider the following:
\begin{eqnarray*}
n\sum_{i=1}^n \alpha_ix_i&=&n\alpha_1x_1+n\alpha_2x_2+\dots+n\alpha_nx_n\\
&=&\sum_{i=1}^n\alpha_ix_1+(n\alpha_1-\sum_{i=1}^n\alpha_i)x_1+n\alpha_2x_2+\dots+n\alpha_nx_n\\ 
&\ge&\sum_{i=1}^n\alpha_ix_1+(n\alpha_1+n\alpha_2-\sum_{i=1}^n\alpha_i)x_2+\dots+n\alpha_nx_n\\ 
&=&\sum_{i=1}^n\alpha_ix_1+\sum_{i=1}^n\alpha_ix_2+(n\alpha_1+n\alpha_2-2\sum_{i=1}^n\alpha_i)x_2+\dots+n\alpha_nx_n\\ 
&  &\vdots\\
&\ge&\sum_{i=1}^n\alpha_ix_1+\sum_{i=1}^n\alpha_ix_2+\dots+\sum_{i=1}^n\alpha_ix_n+(n\alpha_1+n\alpha_2+\dots+n\alpha_n-n\sum_{i=1}^n\alpha_i)x_n\\ 
&=&\sum_{i=1}^n\alpha_i\sum_{i=1}^nx_i
\end{eqnarray*}
Similarly, we have
\begin{eqnarray*}
n\sum_{i=1}^n \beta_ix_{n+1-i}&=&n\beta_1x_n+n\beta_2x_{n-1}+\dots+n\beta_nx_1\\
&=&\sum_{i=1}^n\beta_ix_n+(n\beta_1-\sum_{i=1}^n\beta_i)x_n+n\beta_2x_{n-1}+\dots+n\beta_nx_1\\ 
&\le&\sum_{i=1}^n\beta_ix_n+(n\beta_1+n\beta_2-\sum_{i=1}^n\beta_i)x_{n-1}+\dots+n\beta_nx_1\\ 
&=&\sum_{i=1}^n\beta_ix_n+\sum_{i=1}^n\beta_ix_{n-1}+(n\beta_1+n\beta_2-2\sum_{i=1}^n\beta_i)x_{n-1}+\dots+n\beta_nx_1\\ 
&  &\vdots\\
&\le&\sum_{i=1}^n\beta_ix_n+\sum_{i=1}^n\beta_ix_{n-1}+\dots+\sum_{i=1}^n\beta_ix_1+(n\alpha_1+n\beta_2+\dots+n\beta_n-n\sum_{i=1}^n\beta_i)x_1\\ 
&=&\sum_{i=1}^n\beta_i\sum_{i=1}^nx_i
\end{eqnarray*}
Therefore the lemma follows.
\qed
\end{proof}

\begin{lemma} 
\label{lem:ie1}
\begin{align*}
\sum_{i=1}^t(s-i) &(\frac{s+1}{s})^{i-1}  f(S\cup\{c'_i\})\\
&\le s f(S)+\sum_{i=1}^t(s+1-i)(\frac{s+1}{s})^{i-1} f(S\cup\{c'_i\}\setminus\{b'_i\})-(s+1)(\frac{s+1}{s})^{t-1} f(S\setminus\{b'_1, \dots, b'_t\}).
\end{align*}
%$$s\phi(S)-(s+1)(\frac{s+1}{s})^{t-1}\phi(S\setminus\{b'_1, \dots, b'_t\})$$
%$$\ge\sum_{i=1}^t(s-i)(\frac{s+1}{s})^{i-1}\phi(S\cup\{c'_i\})-\sum_{i=1}^t(s+1-i)(\frac{s+1}{s})^{i-1}\phi(S\cup\{c'_i\}\setminus\{b'_i\}).$$
\end{lemma}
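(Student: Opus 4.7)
The plan is to apply the weak submodularity inequality once per index $i \in \{1,\ldots,t\}$ to a carefully chosen pair of sets, and then combine the $t$ resulting inequalities using the geometric weights $\left(\frac{s+1}{s}\right)^{i-1}$ so that the auxiliary terms collapse by telescoping. Let me write $T_i := S \setminus \{b'_1,\ldots,b'_i\}$, so that $T_0 = S$ and $T_t$ is exactly the set appearing in the last term of the lemma. For each $i$, I will apply weak submodularity to the pair $X_i := T_{i-1}$ and $Y_i := (S \setminus \{b'_i\}) \cup \{c'_i\}$; a direct check gives $|X_i| = s-i+1$, $|Y_i| = s$, $X_i \cap Y_i = T_i$ of size $s-i$, and $X_i \cup Y_i = S \cup \{c'_i\}$ of size $s+1$.

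Plugging these cardinalities into the definition of weak submodularity and rearranging yields, for each $i$,
$$(s-i)\, f(S \cup \{c'_i\}) \;\leq\; s\, f(T_{i-1}) + (s-i+1)\, f(S \cup \{c'_i\} \setminus \{b'_i\}) - (s+1)\, f(T_i).$$
Now multiply the $i$-th inequality by $\left(\frac{s+1}{s}\right)^{i-1}$ and sum over $i = 1,\ldots,t$. The key identity $(s+1)\left(\frac{s+1}{s}\right)^{i-1} = s\left(\frac{s+1}{s}\right)^{i}$ rewrites the $T$-contribution at step $i$ as the discrete difference $s\left(\frac{s+1}{s}\right)^{i-1} f(T_{i-1}) - s\left(\frac{s+1}{s}\right)^{i} f(T_i)$; summing collapses this to $s\, f(T_0) - s\left(\frac{s+1}{s}\right)^{t} f(T_t) = s\, f(S) - (s+1)\left(\frac{s+1}{s}\right)^{t-1} f(T_t)$. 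Combined with the middle sum $\sum_{i=1}^{t} (s+1-i)\left(\frac{s+1}{s}\right)^{i-1} f(S \cup \{c'_i\} \setminus \{b'_i\})$, which survives unchanged, this gives exactly the stated inequality.

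The only nontrivial step is guessing the pair $(X_i, Y_i)$; once this is in hand the rest is bookkeeping. The choice is essentially pinned down by two constraints: the coefficient of $f(S \cup \{c'_i\})$ produced by weak submodularity must be $s-i$ (matching the LHS), which forces $|X_i \cap Y_i| = s-i$ and $X_i \cup Y_i = S \cup \{c'_i\}$; and the residual $T$-terms must chain from $T_{i-1}$ to $T_i$ so that the telescoping yields $s\, f(S)$ and $-(s+1)\left(\frac{s+1}{s}\right)^{t-1} f(T_t)$ at the two endpoints. Notice that the special ordering of $b'_1, \ldots, b'_t$ (the one maximizing a specific weighted sum) is not used here; it will presumably enter later when this purely algebraic bound is combined with the local-optimality condition $f(S) \ge f(S \cup \{c'_i\} \setminus \{b'_i\})$ and with the rearrangement-style Lemma~\ref{lem:ag}.
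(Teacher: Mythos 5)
Your proposal is correct and follows essentially the same route as the paper: the pair $(X_i,Y_i)=(S\setminus\{b'_1,\dots,b'_{i-1}\},\,S\cup\{c'_i\}\setminus\{b'_i\})$ is exactly the pair the paper feeds into the weak submodularity definition, and the geometric weighting $\left(\frac{s+1}{s}\right)^{i-1}$ with the resulting telescoping of the $f(T_i)$ terms is the paper's argument as well. Your added remarks --- that the choice of pair is forced by the target coefficients and that the specific ordering of the $b'_i$ plays no role in this lemma --- are accurate but do not change the substance.
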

\begin{proof}
By the definition of weakly submodular, we have
\begin{align*}
s f(S)+s f(S\cup\{c'_1\}\setminus\{b'_1\}) &\ge (s-1) f(S\cup\{c'_1\})+(s+1) f(S\setminus\{b'_1\})\\
s f(S\setminus\{b'_1\})+(s-1) f(S\cup\{c'_2\}\setminus\{b'_2\}) &\ge (s-2) f(S\cup\{c'_2\})+(s+1) f(S\setminus\{b'_1,b'_2\})\\
&\vdots\\
s f(S\setminus\{b'_1,\dots, b'_{t-1}\})+(s-t+1) f(S\cup\{c'_t\}\setminus\{b'_t\})&\ge (s-t) f(S\cup\{c'_t\})+(s+1) f(S\setminus\{b'_1,\dots, b'_t\})
\end{align*}
Multiplying the $i^{\rm th}$ inequality by $(\frac{s+1}{s})^{i-1}$, and summing all of them up to get
\begin{align*}
s f(S)&+\sum_{i=1}^t(s+1-i)(\frac{s+1}{s})^{i-1} f(S\cup\{c'_i\}\setminus\{b'_i\}) \\
&\ge\sum_{i=1}^t(s-i)(\frac{s+1}{s})^{i-1} f(S\cup\{c'_i\})+(s+1)(\frac{s+1}{s})^{t-1} f(S\setminus\{b'_1, \dots, b'_t\}).
\end{align*}
After rearranging the inequality, we get
\begin{align*}
\sum_{i=1}^t(s-i) &(\frac{s+1}{s})^{i-1} f(S\cup\{c'_i\})\\
&\le s f(S)+\sum_{i=1}^t(s+1-i)(\frac{s+1}{s})^{i-1} f(S\cup\{c'_i\}\setminus\{b'_i\})-(s+1)(\frac{s+1}{s})^{t-1} f(S\setminus\{b'_1, \dots, b'_t\}).
\end{align*}
\qed
\end{proof}

\begin{lemma} 
\label{lem:ie2}
\begin{align*}
\sum_{i=1}^t(s+t-i)(\frac{s+1}{s})^{i-1} f(S\cup\{c''_i\}) &-\sum_{i=1}^t(s+t+1-i)(\frac{s+1}{s})^{i-1} f(S)\\
&\ge s f(S\cup\{c''_1, \dots, c''_t\})-(s+1)(\frac{s+1}{s})^{t-1} f(S)
\end{align*}
\end{lemma}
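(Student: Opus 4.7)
The plan is to mirror the argument of Lemma~\ref{lem:ie1} in a ``dual'' direction. Whereas that proof shrinks $S$ by successively removing $b'_1, \ldots, b'_t$, here I will grow from $S$ up to $S \cup \{c''_1, \ldots, c''_t\}$ by successively inserting the $c''_i$'s, stacking $t$ weak-submodularity inequalities so that the intermediate set terms telescope.

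The key step is choosing the right pair of sets. For each $i \in \{1, \ldots, t\}$, I would apply the weak-submodularity definition to $X_i := S \cup \{c''_{i+1}, \ldots, c''_t\}$ and $Y_i := S \cup \{c''_i\}$. Since the $c''_j$'s are distinct elements of $O \setminus S$, a direct count gives $|X_i| = s+t-i$, $|Y_i| = s+1$, $X_i \cap Y_i = S$ (of size $s$), and $X_i \cup Y_i = S \cup \{c''_i, \ldots, c''_t\}$ (of size $s+t-i+1$). Weak submodularity then reads
\[(s+1)\, f(S \cup \{c''_{i+1}, \ldots, c''_t\}) + (s+t-i)\, f(S \cup \{c''_i\}) \;\ge\; s\, f(S \cup \{c''_i, \ldots, c''_t\}) + (s+t-i+1)\, f(S).\]

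Next I would multiply the $i$th inequality by $(\frac{s+1}{s})^{i-1}$ and sum over $i = 1, \ldots, t$. The telescoping is engineered by the identity $(s+1)(\frac{s+1}{s})^{i-1} = s(\frac{s+1}{s})^{i}$, which guarantees that the term $(s+1)(\frac{s+1}{s})^{i-1} f(S \cup \{c''_{i+1}, \ldots, c''_t\})$ on the LHS of inequality $i$ cancels exactly the corresponding term $s(\frac{s+1}{s})^{i} f(S \cup \{c''_{i+1}, \ldots, c''_t\})$ on the RHS of inequality $i+1$. What survives on the RHS, apart from the accumulated $f(S)$ contributions, is the single boundary term $s\, f(S \cup \{c''_1, \ldots, c''_t\})$ (from $i = 1$); on the LHS, beyond the singleton terms $(s+t-i)(\frac{s+1}{s})^{i-1} f(S \cup \{c''_i\})$, the only residual big-set contribution is $(s+1)(\frac{s+1}{s})^{t-1} f(S)$ coming from $i = t$ where $X_t = S$. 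Moving the latter $f(S)$ term across the inequality then yields exactly the stated bound.

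The only genuinely delicate step is the choice of pair. The more obvious guess $X_i = S \cup \{c''_1, \ldots, c''_{i-1}\}$, $Y_i = S \cup \{c''_i\}$ also satisfies all the size counts, but forcing telescoping with such a pairing requires weights $\lambda_i = (\frac{s}{s+1})^{i-1}$, i.e., the reciprocal of the geometric factor appearing in the statement. Using the reverse suffix $X_i = S \cup \{c''_{i+1}, \ldots, c''_t\}$ instead swaps the roles of the two sets inside the weak-submodularity inequality and flips the geometric progression to the desired $(\frac{s+1}{s})^{i-1}$. Once this pairing has been identified, the rest of the argument is purely algebraic and closely parallels the proof of Lemma~\ref{lem:ie1}.
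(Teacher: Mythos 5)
Your proposal is correct and follows essentially the same route as the paper: the paper's chain of $t$ inequalities is precisely weak submodularity applied to the pairs $S\cup\{c''_i\}$ and $S\cup\{c''_{i+1},\dots,c''_t\}$, multiplied by $(\frac{s+1}{s})^{i-1}$ and telescoped exactly as you describe. The set-size bookkeeping, the cancellation identity $(s+1)(\frac{s+1}{s})^{i-1}=s(\frac{s+1}{s})^{i}$, and the surviving boundary terms all match the paper's argument.
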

\begin{proof}
By the definition of weakly submodular, we have
\begin{align*}
(s+t-1) f(S\cup\{c''_1\})+(s+1) f(S\cup\{c''_2,\dots,c''_{m_i}\})&\ge s f(S\cup\{c''_1,\dots,c''_{m_i}\})+(s+t) f(S)\\
%$$(i+m_i-2)f(S_i\cup\{c_2\})+(i+1)f(S_i\cup\{c_3,\dots,c_{m_i}\})\ge (i)f(S_i\cup\{c_2,\dots,c_{m_i}\})+(i+m_i-1)f(S_i)$$
&\vdots\\
(s+1) f(S\cup\{c''_{t-1}\})+(s+1) f(S\cup\{c''_t\})&\ge s f(S\cup\{c''_{t-1},c''_t\})+(s+2) f(S)\\
s f(S\cup\{c''_t\})+(s+1) f(S)&\ge s f(S\cup\{c''_t\})+(s+1) f(S).\\
\end{align*}
Multiplying the $i^{\rm th}$ inequality by $(\frac{s+1}{s})^{i-1}$, and summing all of them up, we have
\begin{align*}
\sum_{i=1}^t(s+t-i)(\frac{s+1}{s})^{i-1} f(S\cup\{c''_i\})&+(s+1)(\frac{s+1}{s})^{t-1} f(S)\\
&\ge s f(S\cup\{c''_1, \dots, c''_t\})+\sum_{i=1}^t(s+t+1-i)(\frac{s+1}{s})^{i-1} f(S).
\end{align*}
Therefore, we have
\begin{align*}
\sum_{i=1}^t(s+t-i)&(\frac{s+1}{s})^{i-1} f(S\cup\{c''_i\})\\
&\ge s f(S\cup\{c''_1, \dots, c''_t\})+\sum_{i=1}^t(s+t+1-i)(\frac{s+1}{s})^{i-1} f(S)-(s+1)(\frac{s+1}{s})^{t-1} f(S).
\end{align*}
\qed
\end{proof}

Let 
\begin{align*}
&W=\sum_{i=1}^t(s-i)(\frac{s+1}{s})^{i-1}, && X=\sum_{i=1}^t(s+1-i)(\frac{s+1}{s})^{i-1}, \\
&Y=\sum_{i=1}^t(s+t-i)(\frac{s+1}{s})^{i-1}, && Z=\sum_{i=1}^t(s+t+1-i)(\frac{s+1}{s})^{i-1}.
\end{align*}

\begin{lemma} 
\label{lem:ie3}
%$$\frac{\sum_{i=1}^t(s-i)(\frac{s+1}{s})^{i-1}\phi(S\cup\{c'_i\})}{\sum_{i=1}^t(s+t-i)(\frac{s+1}{s})^{i-1}\phi(S\cup\{c''_i\})}\ge\frac{\sum_{i=1}^t(s-i)(\frac{s+1}{s})^{i-1}}{\sum_{i=1}^t(s+t-i)(\frac{s+1}{s})^{i-1}}$$
$$C\sum_{i=1}^t(s-i)(\frac{s+1}{s})^{i-1} f(S\cup\{c'_i\})\ge A\sum_{i=1}^t(s+t-i)(\frac{s+1}{s})^{i-1} f(S\cup\{c''_i\}).$$
\end{lemma}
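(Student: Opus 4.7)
The plan is to apply Lemma~\ref{lem:ag} after reducing both sums to the same sorted list of $f$-values. First, let $x_1 \ge x_2 \ge \cdots \ge x_t$ be the values $\{f(S \cup \{c\}) : c \in C\}$ arranged in non-increasing order, and write $\alpha_i := (s-i)(\frac{s+1}{s})^{i-1}$ and $\beta_i := (s+t-i)(\frac{s+1}{s})^{i-1}$ for the two coefficient sequences appearing in the statement. A direct ratio check shows that $\alpha_i$ is non-increasing in $i$, while $\beta_i$ is non-decreasing in $i$ (in fact $\beta_{t-1} = \beta_t$). Hence $\tilde\beta_i := \beta_{t+1-i}$ is non-increasing.

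By the rearrangement inequality, the maximum of $\sum_i \alpha_i f(S\cup\{c_{\sigma(i)}\})$ over permutations $\sigma$ of $C$ is attained when the $f$-values are paired in matching (non-increasing) order with the decreasing $\alpha_i$, so by the definition of the $c'$ ordering one has $\sum_i \alpha_i f(S\cup\{c'_i\}) = \sum_i \alpha_i x_i$. Analogously, since $\beta$ is non-decreasing, the minimum of $\sum_i \beta_i f(S\cup\{c_{\sigma(i)}\})$ is attained by the opposite pairing, which again amounts to listing the $f$-values in non-increasing order; hence $\sum_i \beta_i f(S\cup\{c''_i\}) = \sum_i \beta_i x_i$.

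Now apply Lemma~\ref{lem:ag} to the three non-increasing sequences $\alpha_i$, $\tilde\beta_i$, $x_i$ to obtain
\[
\sum_{i=1}^t \alpha_i x_i \cdot \sum_{i=1}^t \tilde\beta_i \;\ge\; \sum_{i=1}^t \tilde\beta_i\, x_{t+1-i} \cdot \sum_{i=1}^t \alpha_i.
\]
Reindexing via $j = t+1-i$ gives $\sum_i \tilde\beta_i = \sum_i \beta_i$ and $\sum_i \tilde\beta_i x_{t+1-i} = \sum_j \beta_j x_j$. Substituting back, and re-inserting the identifications from the previous paragraph, yields the claimed inequality, with the multipliers $C$ and $A$ in the lemma identified as $\sum_i \beta_i$ and $\sum_i \alpha_i$ respectively (closed forms for which come from Lemmas~\ref{lem:id1} and \ref{lem:id2}, consistent with the matroid bookkeeping used in the subsequent estimate).

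The main obstacle is the monotonicity mismatch: Lemma~\ref{lem:ag} needs all three sequences to be non-increasing, but $\beta_i$ is non-decreasing, which forces the reversal $\tilde\beta_i := \beta_{t+1-i}$ and a careful reindexing at the end. A secondary subtlety is that, perhaps counter-intuitively, the min-ordering for the $\beta$-weighted sum and the max-ordering for the $\alpha$-weighted sum produce the same sorting of the $c$'s (both in non-increasing order of $f$-value), which is what makes it possible to express both sums against a common sequence $x_i$ and lets the invocation of Lemma~\ref{lem:ag} cleanly deliver the desired inequality.
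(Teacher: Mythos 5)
Your proof is correct and follows the paper's intended route: the paper dismisses this as ``immediate by Lemma~\ref{lem:ag},'' and you have supplied exactly the details that make it so (the rearrangement-inequality identification of both the $c'$- and $c''$-orderings with the common non-increasing list $x_i$, the reversal $\tilde\beta_i=\beta_{t+1-i}$ needed to meet the hypotheses of Lemma~\ref{lem:ag}, and the reindexing back). You also correctly read the multipliers $C$ and $A$ in the statement as the quantities the paper later calls $Y=\sum_i\beta_i$ and $W=\sum_i\alpha_i$, which is how the lemma is actually invoked in the proof of Theorem~\ref{thm:local-search-approx}.
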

\begin{proof}
This is immediate by Lemma~\ref{lem:ag}
\qed
\end{proof}

\begin{theorem}
\label{thm:local-search-approx}
Let $s$ be the size of a basis,
the local search algorithm achieves an approximation ratio bounded by $14.5$ for an arbitrary 
$s$, approximately $10.88$ when $s=6$. The ratio converges to $10.22$ as $s$ tends to $\infty$.
\end{theorem}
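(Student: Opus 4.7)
The plan is to chain together Lemmas \ref{lem:ie1}, \ref{lem:ie2}, and \ref{lem:ie3} in order to eliminate all terms of the form $f(S\cup\{c'_i\})$ and $f(S\cup\{c''_i\})$, leaving a single scalar inequality relating $f(S)$ and $f(O)$ whose coefficients depend only on $s$ and $t=|B|=|C|$. First, I would apply Lemma~\ref{lem:ie1} and invoke the local optimality of $S$, which gives $f(S\cup\{c'_i\}\setminus\{b'_i\})\le f(S)$ for every swap pair (these swaps are feasible in $\cal M$ by Brualdi's lemma). Discarding the nonnegative term $(s+1)(\tfrac{s+1}{s})^{t-1}f(S\setminus\{b'_1,\dots,b'_t\})$ then yields
\[
\sum_{i=1}^t(s-i)(\tfrac{s+1}{s})^{i-1}f(S\cup\{c'_i\})\;\le\;(s+X)\,f(S).
\]
Second, I would apply Lemma~\ref{lem:ie2} and use monotonicity: since $O\subseteq A\cup C\subseteq S\cup C=S\cup\{c''_1,\dots,c''_t\}$, we get $f(S\cup\{c''_1,\dots,c''_t\})\ge f(O)$, which gives
\[
\sum_{i=1}^t(s+t-i)(\tfrac{s+1}{s})^{i-1}f(S\cup\{c''_i\})\;\ge\; s\,f(O)+\bigl[Z-(s+1)(\tfrac{s+1}{s})^{t-1}\bigr]f(S).
\]

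Third, I would apply Lemma~\ref{lem:ie3}, which (since $|C|=t$ and $|A|=s-t$) says that the left-hand sum from Step~1 dominates the sum in Step~2 at a ratio of $t$ to $s-t$. Combining the three steps and solving for the ratio gives, for every $t\in\{1,\dots,s-1\}$,
\[
\frac{f(O)}{f(S)}\;\le\;\rho(s,t)\;:=\;\frac{1}{s}\left[\frac{t(s+X)}{s-t}\;+\;(s+1)(\tfrac{s+1}{s})^{t-1}\;-\;Z\right],
\]
where $X$ and $Z$ are the closed-form sums defined just before Lemma~\ref{lem:ie3} and can be simplified using the geometric and arithmetico-geometric identities of Lemmas~\ref{lem:id1} and \ref{lem:id2}. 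The worst-case ratio is then $\rho(s)=\max_{1\le t\le s-1}\rho(s,t)$, together with the boundary cases: $t=0$ forces $S=O$ (ratio $1$), and $t=s$ (i.e., $A=\emptyset$) must be handled separately, either by noting that any basis already satisfies the weak submodularity inequality trivially with the empty intersection, or by a direct comparison argument using monotonicity plus local optimality across the full bijection.

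Finally, with $\rho(s,t)$ in closed form, the remaining task is analytic/numerical: verify that $\sup_s\rho(s)\le 14.5$, evaluate $\rho(6)\approx 10.88$, and take the limit $s\to\infty$ of the maximizing $t^*(s)$ to recover the asymptotic constant $10.22$. In the limit $s\to\infty$, setting $t=\lambda s$ for some $\lambda\in(0,1)$, the factors $(\tfrac{s+1}{s})^{i-1}$ accumulate to exponentials of the form $e^{\lambda}$, and $\rho(s,t)$ reduces to a one-variable minimization over $\lambda$ whose optimum numerically equals $10.22$. I expect the main obstacle to be the algebraic bookkeeping in Step~3 (keeping track of which of $X$, $Z$, and the correction term $(s+1)(\tfrac{s+1}{s})^{t-1}$ cancels against which), and the case $t=s$, where the factor $(s-t)$ in the denominator blows up unless we use the discarded term $f(S\setminus\{b'_1,\dots,b'_t\})$ more carefully or argue directly from local optimality on the full swap.
\qed
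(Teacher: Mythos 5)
Your overall architecture matches the paper's: combine Lemma~\ref{lem:ie1} with local optimality to bound the $c'$-sum above by $(s+X)f(S)$, combine Lemma~\ref{lem:ie2} with monotonicity to bound the $c''$-sum below by $sf(O)$ plus a correction in $f(S)$, link the two sums by a rearrangement-type inequality, and optimize the resulting ratio over $t$ and $s$. Your Steps 1 and 2 are exactly right. The gap is in Step 3. The two sums are not compared ``at a ratio of $t$ to $s-t$'': the multipliers supplied by Lemma~\ref{lem:ag} are the sums of the coefficient sequences themselves, namely $W=\sum_{i=1}^t(s-i)(\frac{s+1}{s})^{i-1}$ and $Y=\sum_{i=1}^t(s+t-i)(\frac{s+1}{s})^{i-1}$, giving
$$Y\sum_{i=1}^t(s-i)\Bigl(\frac{s+1}{s}\Bigr)^{i-1}f(S\cup\{c'_i\})\;\ge\;W\sum_{i=1}^t(s+t-i)\Bigl(\frac{s+1}{s}\Bigr)^{i-1}f(S\cup\{c''_i\}),$$
which is precisely why the maximizing ordering $c'$ and minimizing ordering $c''$ were introduced. (The ``$C$'' and ``$A$'' in the statement of Lemma~\ref{lem:ie3} are typos for these scalar sums, not the cardinalities $|C|=t$ and $|A|=s-t$; nothing in Lemma~\ref{lem:ag} produces set cardinalities as multipliers.) With your multipliers the resulting $\rho(s,t)$ is a different function from the correct
$$\frac{f(O)}{f(S)}\le\frac{YX-WZ+Ys+W(s+1)(\frac{s+1}{s})^{t-1}}{Ws}=\frac{2s(\frac{s+1}{s})^{2t}-2t(\frac{s+1}{s})^{t}-2s}{(2s-t)(\frac{s+1}{s})^{t}-2s},$$
and it cannot reproduce the stated constants: your bound has a $1/(s-t)$ singularity at $t=s$, whereas in the correct analysis $t=s$ (i.e.\ $r=t/s=1$) is exactly where the maximum is attained and yields the finite values $14.5$ (at $s=2$) and $10.22$ (as $s\to\infty$). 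So the case you flag as needing separate treatment is not a boundary nuisance --- it is the worst case, and the denominator $Ws$ remains positive there.

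A second, smaller point: the final stage is not merely numerical verification. The paper substitutes $x=(\frac{s+1}{s})^s\in[2.25,e]$ and $r=t/s\in(0,1]$, reduces the bound to $g(x,r)=\frac{2x^{2r}-2rx^{r}-2}{(2-r)x^{r}-2}$, locates the unique interior critical point $x^*=(\frac{2+r}{2-r})^{1/r}$ where $g(x^*,r)=\frac{2r^2+8}{(r-2)^2}\le 10$, and shows $g$ is increasing in $r$ on the boundary lines $x=2.25$ and $x=e$, so the maximum is $g(x,1)=\frac{2x^2-2x-2}{x-2}$, which evaluates to $14.5$ and $10.22$ at the two endpoints. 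Your plan needs this closed form --- and hence the corrected Step 3 --- before any of the three numbers in the statement can be certified.
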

\begin{proof}
Since $S$ is a locally optimal solution, we have 
$$ f(S)\ge f(S\cup\{c'_i\}\setminus\{b'_i\}).$$
Since $ f(S\setminus\{b'_1, \dots, b'_t\})\ge 0$, by Lemma~\ref{lem:ie1}, we have
$$\sum_{i=1}^t(s-i)(\frac{s+1}{s})^{i-1} f(S\cup\{c'_i\})\le s f(S)+\sum_{i=1}^t(s+1-i)(\frac{s+1}{s})^{i-1} f(S).$$
Therefore,
$$\sum_{i=1}^t(s-i)(\frac{s+1}{s})^{i-1} f(S\cup\{c'_i\})\le (s+X) f(S).$$
On the other hand, we have $O\subseteq S\cup\{c''_1, \dots, c''_t\}$, by monotonicity, we have
$ f(O)\le  f(S\cup\{c''_1, \dots, c''_t\})$. By Lemma~\ref{lem:ie2}, we have
$$\sum_{i=1}^t(s+t-i)(\frac{s+1}{s})^{i-1} f(S\cup\{c''_i\})\ge s f(O)+[Z-(s+1)(\frac{s+1}{s})^{t-1}] f(S).$$
Lemma~\ref{lem:ag}, we have
$$Y\sum_{i=1}^t(s-i)(\frac{s+1}{s})^{i-1} f(S\cup\{c'_i\})\ge W\sum_{i=1}^t(s+t-i)(\frac{s+1}{s})^{i-1} f(S\cup\{c''_i\}).$$
Therefore
$$Y(s+X) f(S)\ge Ws f(O)+X[Z-(s+1)(\frac{s+1}{s})^{t-1}] f(S)$$
Hence the approximation ratio:
$$\frac{ f(O)}{ f(S)}\le\frac{YX-WZ+Ys+W(s+1)(\frac{s+1}{s})^{t-1}}{Ws}=\frac{YX-WZ+Ys}{Ws}+(\frac{s+1}{s})^t.$$
Simplifying the notation, we have
$$\frac{ f(O)}{ f(S)}\le\frac{\sum_{i=1}^t (s^2+st+ti-si)(\frac{s+1}{s})^{i-1}+\sum_{i=t+1}^{2t-1}t(2t-i)(\frac{s+1}{s})^{i-1}}{\sum_{i=1}^t s(s-i)(\frac{s+1}{s})^{i-1}}+(\frac{s+1}{s})^t.$$
Using Lemma~\ref{lem:id1} and \ref{lem:id2} to simply it further, we have
$$\frac{ f(O)}{ f(S)}\le\frac{2s(\frac{s+1}{s})^{2t}-2t(\frac{s+1}{s})^{t}-2s}{(2s-t)(\frac{s+1}{s})^{t}-2s}.$$ 
Let $x=(\frac{s+1}{s})^s$ and $r=\frac{t}{s}$, we study the continuous version of the above function
$$g(x,r)=\frac{2x^{2r}-2rx^{r}-2}{(2-r)x^{r}-2}.$$
Since $S$ is a local optimum with respect to the swapping of any single element and by the definition of $x, s$ and $t$, we have $2 \le t \le s$ and hence  
$2.25\le x \le e$ and $0< r \le 1$. Our goal then is to establish an upper
bound on $g(x,r)$ for $2.25 \le x \le e$ and $0 < r \le 1$. We will think
of $g(x,r)$ as implictly  defining $x$ as a function of $r$ at 
points where $g(x,r)$ can possibly take on a maximum value, namely when 
when $\frac{\partial g(x,r)}{\partial x} = 0$ and at the boundary 
points for $x$.
 
Note that since $x\ge 2.25$,
$$x>\left(\frac{2}{2-r}\right)^\frac{1}{r},$$
for all $0< r \le 1$. Therefore, we have $(2-r)x^{r}-2>0$ for given $x$ and $r$. It is easy to verify that function $g(x,r)$ is continuous and differentiable. For any fixed $r$, the function has two boundary points at $x=2.25$ and $x=e$, and taking partial derivative with respect to $x$, we have 
$$\frac{\partial g(x,r)}{\partial x}=\frac{2rx^{r-1}(x^r-1)[(2-r)x^r-(2+r)]}{[(2-r)x^r-2]^2}.$$ 
Therefore the only point where the partial derivative equals to zero is
$$x^*=(\frac{2+r}{2-r})^{\frac{1}{r}}.$$ 
Plugging this into the original expression for $g(x,r)$, we have
$$g(x^*,r)=\frac{2r^2+8}{(r-2)^2}.$$
The function $g(x^*,r)$ is monotonically increasing with respect to $r\in (0,1]$ and it has a maximum value of $10$ when $r=1$. 

Now it only remains to check the two boundary points $x=2.25$ and $x=e$. Note that these are fixed values. We now fix $x$, and take partial derivative with respect to $r$:
$$\frac{\partial g(x,r)}{\partial r}=\frac{2x^r(x^r-1)[(2\ln x-r\ln x+1)x^r-(2\ln x+r\ln x+1)]}{[(2-r)x^r-2]^2}.$$
Since $x^r>0$, $x^r-1>0$ and $[(2-r)x^r-2]^2>0$. If we can show that $$(2\ln x-r\ln x+1)x^r-(2\ln x+r\ln x+1)>0$$ then the function after fixing $x$ is monotonically increasing with respect to $r$. We use the Taylor expansion of $x^r$ at $x=0$. 
$$x^r>1+r\ln x+\frac{1}{2}r^2\ln^2x.$$
Therefore,
$$(2\ln x-r\ln x+1)x^r-(2\ln x+r\ln x+1)>r\ln x(2\ln x+r\ln^2x-\frac{1}{2}r^2\ln^2x-\frac{1}{2}r\ln x -1).$$
Note that we only need to check for the case when $x=e$ and $x=2.25$.
\be
\item Case $x=e$: $$2\ln x+r\ln^2x-\frac{1}{2}r^2\ln^2x-\frac{1}{2}r\ln x -1=1+\frac{1}{2}r-\frac{1}{2}r^2>0.$$
\item Case $x=2.25$: $$2\ln x+r\ln^2x-\frac{1}{2}r^2\ln^2x-\frac{1}{2}r\ln x -1>0.6+0.6r-0.5r-0.4r^2>0.$$
\ee
Therefore $(2\ln x-r\ln x+1)x^r-(2\ln x+r\ln x+1)>0$, and hence $\frac{\partial g(x,r)}{\partial r}>0$ for $x=2.25$ and $x=e$. Therefore the maximum is obtained when $r=1$.
Plug $r=1$ into the original formula, we have
$$g(x,1)=\frac{2x^2-2x-2}{x-2}.$$
Evaluate it for $x=e$ and $x=2.25$, we have
$g(e,1)=10.22$ and $g(2.25,1)=14.5$.
This completes the proof.
%\newpage
 %The only point where the partial derivative equals to zero is
%$$x^{r^*}=\frac{2\ln x+r^*\ln x +1}{2\ln x-r^*\ln x +1}.$$ 
%So it is sufficient to check this point $r^*$, and two boundary conditions: $r=1$ and $r\rightarrow 0$.

%The expression is monotonically increasing with $t$ and is bounded from above\footnote{This number is obtained by a computer program.}  by 14.5 for $s>1$.
%In particular, it has an approximate value of $10.88$ when $s=6$. The ratio converges to $10.22$ as $s$ tends to $\infty$.
\qed
\end{proof}

%After some further simplification the upper bound 
%for $\frac{f(O)}{f(S)}$ becomes 
%$$H(s,t) = \frac{2s (\frac{s+1}{s})^{2t} -2t(\frac{s+1}{s)})^t -2s}{(2s-t)(\frac{s+1}{s})^t -2s}.$$ 
%By the definition of $t = |S \setminus (O \cap S)|$, we have $2 \leq t \leq s$ since 
%the local search solution $S$ is a local optimum with respect to a single
%swap. 
%\begin{proposition} $H(s,t)$ is monotonically increasing in $t$ for 
%$2 \leq t \leq s$ 
%and its maximum at $t=s$ converges approximately 
%\footnote{The convergence to 10.22 and the approximations for small $s$ 
%were first obtained by a computer program and then
%verified as shown in the proof.}  to 10.22 as 
%$s$ tends to $\infty$.  In particular, $H(s,s)$ is bounded from
%above by 14.5 for $s>1$ and
%has an approximate value of $10.88$ when $s=6$.
%\end{proposition}

%\begin{proof}
%We first consider $h(s) = \lim_{s \rightarrow \infty} H(s,s)$. 
%Since $\lim_{s \rightarrow \infty} (\frac{s+1}{s})^s = e$, we
%have $h(s) = \frac{2(e^2 - e -1)}{(e-2)}$ which is approximately 10.22.
%To see that $H(s,t)$ is monotonically increasing for $2 \leq s \leq t$, we
%fix an arbitrary $s$. 
%Letting $x = (\frac{s}{s+1})^t$, we have 1 < x < 3$ since $2 \leq t \leq s$. 

%\qed
%\end{proof}

\section{Conclusion and Open Problem}
\label{sec:conclusion}
Motivated by the max-sum diversification problem we are led
to study a generalization of monotone submodular functions that
we call weakly-submodular functions. This class includes the supermodular
max-sum dispersion problem. 

There are several open problemsi regarding the class of weakly submodular 
functions.  
First we would like to find other natural functions that are 
monotone and non-monotone weakly submodular. 
As we have shown, our class does for example contain some but not all
functions with small supermodular degree as well as some functions that
do not have small submodular degree. Indeed, weakly submodular
functions are incomparable with functions having small supermodular degree.  
Another obvious question is whether there is an analogue
of the marginal decreasing property that characterizes submodular
functions or at least analogues that would be a consequence of 
weak submodularity and would be useful in analyzing algorithms.  

In terms of computational problems 
regarding the optimization of monotone  weakly submodular functions many interesting 
questions remain. 
Similar to the maximization for an arbitrary matroid constraint using
local search, we would like 
to have a proof of the convergence of the greedy approximation bound
for the cardinality constraint. Another immediate open problem
is to close the gap between the upper and lower bounds we know for
approximating an arbitrary monotone weakly submodular function subject to
cardinality or matroid constraints. We note that although all of our individual examples in section \ref{sec:examples} can either be computed optimally or 
have better approximation
ratios than we can prove for the class of monotone weakly submodular 
functions, it does not follow that a sum of such functions can be 
computed with such good polynomial time approximations. 
It would also be of interest to
consider an approximation for maximizing a weakly submodular function 
subject to a knapsack constraint. Finally, are the efficient 
constant appropximation algorithms for maximizing  non monotone
weakly submodular functions.

\paragraph{Acknowledgment}
We thank Norman Huang for many comments and in particular for Proposition 
\ref{prop:complement}. 
This research is supported by the Natural Sciences and Engineering Research Council of Canada 
and the University of Toronto, Department of Computer Science.

\bibliographystyle{abbrv}
\nocite{*}
\bibliography{weakly-submodular}

\end{document}